\newtheorem{definition}{Definition}
\newtheorem{theorem}{Theorem}
\newtheorem{proposition}[theorem]{Proposition}
\pgfplotsset{compat=1.18}
\def\@Aboxed#1&#2\ENDDNE{%
  \settowidth\@tempdima{$\displaystyle#1{}$}%
  \addtolength\@tempdima{\fboxsep}%
  \addtolength\@tempdima{\fboxrule}%
  \global\@tempdima=\@tempdima
  \kern\@tempdima
  &
  \kern-\@tempdima
  \fcolorbox{red}{yellow}{$\displaystyle #1#2$}
}
\title{Modeling the Economic Impacts of \\ AI Openness Regulation}
\author{%
  Tori Qiu  \\
  Carnegie Mellon University \\
  \texttt{toriq@andrew.cmu.edu} \\
  \And
  Benjamin Laufer \\
  Cornell Tech \\
  \texttt{bdl56@cornell.edu} \\
  \AND
  Jon Kleinberg \\
  Cornell University \\
  \texttt{kleinberg@cornell.edu} \\
  \And
  Hoda Heidari \\
  Carnegie Mellon University \\
  \texttt{hheidari@andrew.cmu.edu} \\
}
\begin{document}

\maketitle

\begin{abstract}
Regulatory frameworks, such as the EU AI Act, encourage openness of general-purpose AI models by offering legal exemptions for ``open-source'' models. Despite this legislative attention on openness, the definition of open-source foundation models remains ambiguous.  This paper models strategic interactions among the creator of a general-purpose model (the \textit{generalist}) and the entity that fine-tunes the general-purpose model to a specialized domain or task (the \textit{specialist}), in response to regulatory requirements on model openness. We present a stylized model of the regulator's choice of an open-source definition to evaluate which AI openness standards will establish appropriate economic incentives for developers. Our results characterize market equilibria --- specifically, upstream model release decisions and downstream fine-tuning efforts --- under various openness regulations and present a range of effective regulatory penalties and open-source thresholds. Overall, we find the model's baseline performance determines when increasing the regulatory penalty vs. the open-source threshold will significantly alter the generalist's release strategy. Our model provides a theoretical foundation for AI governance decisions around openness and enables evaluation and refinement of practical open-source policies. 
\end{abstract}

\section{Introduction}

Foundation models with publicly available weights, such as Llama and Deepseek \cite{llama, deepseek}, have the potential to distribute economic benefits and broaden access to AI development. However, developers of these general-purpose models have faced criticism for marketing their products as ``open-source'' while departing from established open-source principles and concealing essential design decisions \cite{openwash2024}. Traditionally, open-source software (OSS) describes publicly available source code with few restrictions on use, modification, and redistribution \cite{OSI}. Foundation models complicate the standard open-source definition, as they exhibit different degrees of openness, ranging from models with hosted API access to open-weight models to fully open models with public weights, code, data, and no use restrictions \cite{solaiman, gundersen2021}. Open-source definitions that transfer principles of OSS to foundation models prohibit any use restrictions or monetization of the model \cite{EUAIAct} and may require disclosure of all components, including training data \cite{OSI}. Industry embraces a more lenient definition, where open-weights models released under a non-proprietary license qualify as open-source, even when commercial use restrictions exist (e.g., Llama and Mistral Large \cite{mistrallicense,llamalicense}). 

Regulatory frameworks for foundation models have proposed various definitions of ``open-source'' that reflect this uncertainty. While the EU AI Act exempts open-source models from disclosure and documentation requirements, it disqualifies any monetized open-source AI products from using exemptions for open-source models \cite{EUAIAct}. Singapore's generative AI governance framework stipulates that open-source models should provide all source code, documentation, and data required to retrain the model from scratch \cite{singaporeMGFGenAI}. Other regulations provide less clarity ---  the US AI Foundation Model Transparency Act mentions ``open source foundation models'' as possibly being exempt from data disclosure requirements, but does not define open source \cite{AIFMAct}.

This work offers a game-theoretic model of strategic interactions under various AI openness regulations. A general-purpose technology producer (the \textit{generalist}) decides an openness level for a multi-purpose AI technology. One or more \textit{domain specialist(s)} may then adapt or fine-tune the base model depending on the openness level set by the generalist. Finally, a regulatory threshold moderates the costs associated with different openness levels. We solve the game by providing a closed-form solution for both players' strategies in Section \ref{sec:decision-model}. Sections \ref{sec:game-fixed} and \ref{sec:regulation_effects} analyze the sensitivity of the generalist's openness decision, bargaining outcomes, and the resulting fine-tuning investments to changes in the threshold and penalty. Finally, to illustrate the practical application of our model, we show that the model's predicted equilibrium corresponds to real-world release strategies of developers and present takeaways for open-source AI regulation in Section \ref{sec:discussion}. 

\subsection{Related Work}

Open-source contributions are often framed as altruistic \cite{andreoni, athey} because conventional valuation methods cannot appraise products with a price of zero and unknown quantities due to unrestricted copying and redistribution rights \cite{hoffmann2024}. However, when developers' decisions to open source are viewed as strategic responses to competitors or internal production constraints, these openness decisions transform into a competitive choice to reduce costs or penetrate price-sensitive markets \cite{suh2018}. Dynamic models of competition and adoption in markets with proprietary software and OSS substitutes \cite{atal2014, casadesus2006, suh2018} link open-source decisions to product performance, capturing profit-driven strategy rather than appealing to altruism. For foundation models in particular, recent work examines how openness affects strategic behavior throughout the AI value chain. Building on the theoretical framework of \emph{fine-tuning games}, Xu et al. \cite{xu2024} show that intermediate levels of openness can motivate early-market specialists to strategically curtail their fine-tuning efforts to outcompete subsequent specialists, and this reduced model performance leaves generalists, specialists, and end consumers worse off relative to zero openness.  Wu et al. \cite{wu2025} similarly assess how open-source engagement with a closed-source counterpart impacts innovation for pretrained and adapted foundation models \cite{wu2025}. 

This paper makes two contributions to modeling strategic openness decisions. First, while existing models evaluate openness as a binary between fully closed or fully open \cite{suh2018, wu2025}, we analyze how generalists strategically adjust their openness levels along a spectrum in order to capture the competitive dynamics of offering partially open models that do not clear the open-source threshold. Second, we consider how administratively imposed openness guidelines alter the generalist's release strategy.  Xu et al.'s  \cite{xu2024} setup assumes there is a single, exogenous openness level imposed on the generalist, with the analysis centering on how this fixed level affects specialist strategies. Unlike approaches which evaluate the effects of openness on technology adoption and improvement (or fine-tuning) decisions at the specialist level, our setup simultaneously models how generalists adapt their openness and compliance decisions in response to regulatory constraints. 

\section{Model}\label{sec:decision-model}

This section describes our model of strategic interactions between actors who contribute to the development of a general-purpose AI model. We begin by introducing the concept of \textit{openness} we aim to measure, then present the model formally.

\subsection{Levels of Openness}

Our concept of openness, denoted $\omega \in [0, 1]$, projects the observed behaviors of developers onto a continuum to reason about how these behaviors affect developer revenue and strategic reactions. We offer a numeric score for the sake of analysis, under the assumption that the set of choices a firm makes can be mapped into a single range that summarizes the cumulative amount of effort invested in opening a model.  We rely on prior work to define how different openness practices correspond to points on the continuum \cite{solaiman, transparency_index} and treat this one-dimensional measure of openness as a modeling abstraction.

\begin{figure}[!ht] \label{fig:model_access}
\scriptsize
\begin{center}
\begin{tabular}{|c|c|c|c|c|c|c|} 
 \hline
 & & & & & & \\
Fully & Hosted & API access & API access & Weights  & Weights, code,  data available & Weights, code, data available
\\ 
 closed &  access &  to model & to fine-tuning &  available &  with use restrictions &  without use restrictions  \\
  & & & & &  &  \\
 \hline
\end{tabular}
\end{center}
\vspace{0.3cm}
\[ 0 \xleftarrow{\hspace*{5.4cm}} \text{ Model Access } \xrightarrow{\hspace*{5.4cm}} 1 \]
\caption{Example of an openness continuum defining $\omega \in [0, 1]$ based on model access \cite{solaiman}.}
\end{figure}

 Several works suggest approaches for converting multidimensional openness assessments of models into a spectrum, using feedback mechanisms, risk descriptions, levels of system access, and other indicators to quantify openness \cite{transparency_index, openwash2024, solaiman}. Our work builds on these recommendations by characterizing how the evaluation of openness along such a spectrum affects the incentives of general-purpose model developers. 

\subsection{Game Setup}

The interactions between AI developers are formalized as a game between two players, which we refer to as the generalist ($G$) and domain specialist ($D$). Before the game begins, we assume that the generalist has invested in bringing the general-purpose model to an initial performance level $\alpha_0 \in \mathbb{R}^{+}$. Performance represents a coarse measure of the model's accuracy on target tasks, along with generation speed and efficiency. We further assume that a regulator publishes an open-source threshold $\theta \in [0, 1]$, which is a value on some continuum of openness. The regulator's choice of $\theta$ defines the threshold at which $G$ receives legal exemptions and, by extension, dictates $G$'s costs of development for an open vs. closed model. After the regulation is announced, $G$ and $D$ make investment decisions sequentially, as proposed in Laufer et al. \cite{laufer2024}. The generalist must decide an openness level $\omega \in [0, 1]$ to offer the model at given the $\theta$ threshold.  An openness level of $\omega = 0$ corresponds to $G$ choosing not to release the model to market, whereas an openness level of $\omega = 1$ indicates a fully open model with all components publicly accessible and zero use restrictions. The specialist then brings the model to an improved performance level $\alpha_1$ in the fine-tuning stage, cooperating with the generalist to increase their joint revenue. The full game with $G$'s openness decision has the following stages:

\begin{enumerate}
    \item $G$ and $D$ bargain over $\delta (\omega, \alpha_1) \in [0, 1]$, a coefficient that controls the division of revenue between the two players.  
     \item $G$ chooses an openness level $\omega \in [0, 1]$ for the base model with performance $\alpha_0$, where the model is considered open if $\omega \geq \theta$ and closed otherwise. An openness level closer to $0$ represents a more closed model, whereas $1$ represents a fully open model. When $G$ abstains from releasing the model ($\omega = 0$), the specialist receives zero utility. 
    \item Assuming $G$ does not abstain, $D$ chooses to adopt $G$'s model and invests in improving it to performance $\alpha_1 \geq \alpha_0$.
\end{enumerate}

We assume that players will choose to abstain when their utilities are negative, although in practice, developers may tolerate short-term losses. The final revenue shared by $G$ and $D$, given by a revenue function $r(\alpha_1): \mathbb{R}^{+} \rightarrow \mathbb{R}^{+}$, depends on the model's final performance $\alpha_1$. For simplicity and unless otherwise stated, we assume $r$ is the identity function in our analysis. In standard cooperative bargaining games \cite{nash1950, schellenberg1990}, $\delta(\omega, \alpha_1)$ represents the result of a linear revenue-sharing contract between players. $G$ receives $\delta (\omega, \alpha_1)\cdot r(\alpha_1)$ as its share of the final revenue, while $D$ receives $(1-\delta(\omega, \alpha_1)) \cdot r(\alpha_1)$. However, this model of surplus division in bilateral bargains \cite{feng2013, napel2002} is challenged when an openness decision enters $G$'s strategy. The bargaining parameter $\delta(\omega, \alpha_1)$ represents the result of licensing and subscription arrangements when dividing revenue from foundation model development.  Higher openness forces $G$ to forfeit a share of the bargained revenue $\delta (\omega, \alpha_1)\cdot r(\alpha_1)$ because $G$ can no longer charge directly for model usage or other support services and must instead replace direct revenue from model performance with indirect benefits, such as a reputational premium for offering an open model. $\delta(\omega, \alpha_1)$ is simplified to $\delta$ for the remainder of the paper, but we treat it as endogenous to players' strategic decisions due to the bargaining stage.

If the model is open, $G$'s payoff is fixed to $\epsilon \alpha_1$, where $\epsilon \in [0, 1]$ is a constant that captures reputational gains of $G$. The $\epsilon$ constant accounts for ``commons-based'' production, where participants organize efforts and capture value without exclusive rights via property or contracts, relying instead on indirect rewards like reputation and cost sharing \cite{benkler2004b}. This contrasts with the $\delta$ parameter, which reflects direct market incentives through a revenue-sharing agreement between the generalist and specialist.

\subsection{Utility \& Cost Functions}

\begin{table}[!ht]
\centering
\hspace{-0.8cm}\begin{tabular}{lll}
\hline \vspace{0.05cm}
Utility ($U$)   &   Cost  $(\phi)$\\
\hline \hline \vspace{-0.2cm} \\

   $U_{G} := (\epsilon\omega + \delta (1-\omega) ) \cdot r(\alpha_1) - \phi_{G}(\omega | \theta)$ &  $\phi_{G}:= \alpha_0\omega + c_{\omega}  \alpha_1 (1-\omega) + p \mathbbm{1} [\omega < \theta]$\\

$U_D := (1-\delta (1-\omega)) \cdot  r(\alpha_1) - \phi_{D}(\alpha_1 | \omega, \alpha_0)$ & $\phi_{D}  := (\alpha_1 - \alpha_0)^2  \omega^{-1} + c_\omega \alpha_1 \omega$     \vspace{0.2cm}  \\ 
\hline 
\vspace{0.05cm}
\end{tabular}

\caption{Utility and cost functions for the generalist and specialist. The constant parameters $\epsilon$ and $c_\omega$ define costs and rewards exogenous to the game, and the value of $\alpha_0$ is fixed because our game is only concerned with $G$'s decision to open an existing general-purpose model. Openness regulation takes the form of parameters $\theta$ (representing stringency of the open-source definition) and $p$ (representing the penalty for failing to meet the open-source definition).}
\label{table:utility_functions}
\end{table}

Table \ref{table:utility_functions} describes the cost and utility functions of both players. Each player's utility is calculated as their revenue share minus the costs of model improvement and hosting. $G$'s revenue structure changes according to the openness level at which the model is offered, with reputational gains ($\epsilon$) influencing revenue more as openness increases and the bargaining outcome ($\delta$) influencing revenue more as openness decreases.

\begin{equation} \label{eq:G_cost}
\phi_{G}(\omega| \theta, \alpha_0 ) =  \underbrace{\alpha_0 \omega}_{\text{1. production}} + \underbrace{c_\omega \alpha_1 (1-\omega)}_{\text{2. operation}} + \underbrace{p \mathbbm{1} [\omega < \theta]}_{\text{3. regulatory}} 
\end{equation}

\textbf{Generalist Cost Function.} $G$'s cost function (Equation \ref{eq:G_cost}) can be categorized into production costs, operation costs, and regulatory costs. \emph{Production costs} refer to a fixed cost the generalist faces to offer the model with performance $\alpha_0$ an openness level $\omega$. In our case, we use $\alpha_0 \omega$ to cover the production costs of documenting and releasing the model, which scale with higher levels of openness and performance. \emph{Operation costs} capture the variable expenses required to run and host the model, including costs of inference and system maintenance. For a more closed model, $G$ hosts specialized servers to process user requests and charges the user via pay-per-use API calls. The generalist's operation costs decrease with $\omega$ because $G$ can offload hosting and inference costs to users and benefit from community-built optimizations at higher levels of openness. For simplicity, we normalize the coefficients for production and regulatory compliance costs to one, using $c_\omega$ to compare between operation costs and revenue. The constant $c_\omega$ quantifies the ratio of average operation costs to revenue per unit of performance for $D$.  $G$ pays an additional \emph{regulatory cost}, $p$, to comply with regulatory requirements when the model is closed ($\omega < \theta$) and does not quality for open-source exemptions.

\begin{equation} \label{eq:D_cost}
\phi_{D} (\alpha_1 | \omega, \alpha_0) = \underbrace{(\alpha_1 - \alpha_0)^2 \omega^{-1}}_{\text{1. production}} + \underbrace{c_\omega \alpha_1 \omega}_{\text{2. operation}} 
\end{equation}

\textbf{Specialist Cost Function.} The total cost for the domain specialist $\phi_D$ (Equation \ref{eq:D_cost}) is composed of production costs and operation costs. As $\omega$ approaches zero, $D$'s production costs go to infinity, indicating an infeasible strategy where it becomes prohibitive to fine-tune or improve a model with heavily restricted access. However, cheaper production costs for models with higher levels of openness are partially offset by the operation costs $D$ must pay for self-hosting the model and shouldering inference costs directly. We prove results for a particular set of quadratic cost functions and monotonic revenue functions, although these represent only a specific subset of possible functional forms for the broader class of utility functions. Specifically, we assume $D$ faces quadratic costs when developing the model from performance $\alpha_0$ to $\alpha_1$ since incremental improvements are more expensive at higher performance, and discount development costs for $D$ by a factor of $\omega^{-1}$ since there are fewer charges associated with adopting and improving a more open model (Table \ref{table:utility_functions}).

\textbf{Specialist \& Generalist Cost Sharing.} In the basic version of the fine-tuning game presented by prior work \cite{laufer2024}, model performance is the only feature in $G$'s strategy space.  The cost functions for the generalist and specialist are condensed into single terms, as $\phi_{G} = {\alpha_0}^2$ and $\phi_{D} = {(\alpha_1 - \alpha_0)}^2$. This simplification is effective when investments in model performance represent a \textit{joint} production process with shared revenue, as development and operation costs rise with performance for the generalist and specialist. However, our game setup explicitly models both performance and openness as decision variables. Unlike the original fine-tuning game \cite{laufer2024}, where costs scale equivalently for all features (the single feature being performance), openness imposes opposite effects on cost for the two players. For any given $\alpha_1$, higher openness decreases production costs for the specialist but increases operation costs.  These changes in cost are shifted to the generalist, and higher openness increases production costs for the model but decreases operation costs for the generalist. To capture how openness induces cost shifts between players, our game explicitly decomposes cost functions into separate terms for development and operation. Appendix \ref{sec:robustness_checks} has more detailed discussion of the utility functions and cost sharing.

\section{Analysis of Equilibrium Strategies}\label{sec:game-fixed}

This section characterizes the subgame-perfect equilibria strategies as a function of the game parameters $\epsilon, c_\omega, p, \theta$. We provide closed-form expressions for the best response of each player under a fixed revenue-sharing parameter $\delta$. Proofs of all propositions are provided in Appendix \ref{subsec:spe_fixed_perf}. 

\subsection{Subgame Perfect Equilibria}

\begin{proposition}[Characterization $D$'s Equilibrium Strategy]\label{prop:D_best_response} In a game with quadratic costs and a monotonic revenue function $r(\alpha_1)$, if $\omega \leq \frac{1-\delta}{c_\omega - \delta}$, then $D$'s best-response strategy is given by $\alpha_1^* = \alpha_0 + \frac{\omega (1 - \delta (1-\omega) - c_\omega \omega) }{2}$. If $\omega > \frac{1-\delta}{c_\omega - \delta}$, $D$ will choose to abstain.
\end{proposition}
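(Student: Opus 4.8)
The plan is to treat the specialist's problem as a single-variable constrained optimization in $\alpha_1$, holding $\omega$ and $\delta$ fixed (both are pinned down earlier in the game tree), and then to compare the resulting optimal payoff against the specialist's reservation utility of zero in order to decide between adoption and abstention. Substituting the identity revenue function $r(\alpha_1) = \alpha_1$ into $U_D$ gives
\[
U_D(\alpha_1) = \left(1 - \delta(1-\omega)\right)\alpha_1 - (\alpha_1 - \alpha_0)^2 \omega^{-1} - c_\omega \alpha_1 \omega,
\]
a function of $\alpha_1$ on the feasible set $\alpha_1 \geq \alpha_0$.

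First I would verify strict concavity by computing $\partial^2 U_D / \partial \alpha_1^2 = -2\omega^{-1} < 0$ for $\omega > 0$, which guarantees that the stationary point obtained from the first-order condition is the unique global maximizer of the unconstrained problem. Next I would solve the first-order condition $\left(1 - \delta(1-\omega)\right) - 2(\alpha_1 - \alpha_0)\omega^{-1} - c_\omega \omega = 0$ for the unconstrained optimizer, which rearranges directly to the claimed expression $\alpha_1^* = \alpha_0 + \tfrac{1}{2}\omega\left(1 - \delta(1-\omega) - c_\omega \omega\right)$. The feasibility requirement $\alpha_1^* \geq \alpha_0$ is then equivalent to the bracketed term being nonnegative, i.e. $1 - \delta(1-\omega) - c_\omega\omega \geq 0$; rewriting this as $1 - \delta \geq \omega(c_\omega - \delta)$ and dividing by $c_\omega - \delta$ yields exactly the threshold $\omega \leq \tfrac{1-\delta}{c_\omega - \delta}$.

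The step I expect to require the most care is the abstention argument, for two reasons. First, the inequality manipulation that divides by $c_\omega - \delta$ must track the sign of that quantity: I would note that when $c_\omega \leq \delta$ the bracket is nonnegative for every $\omega \in [0,1]$ (since $1 - \delta \geq 0$), so the threshold is never binding and abstention does not arise, which means the stated threshold is meaningful precisely in the regime $c_\omega > \delta$. Second, to justify that the specialist abstains rather than merely adopting at the boundary $\alpha_1 = \alpha_0$, I would evaluate $U_D(\alpha_0) = \alpha_0\left(1 - \delta(1-\omega) - c_\omega\omega\right)$ and observe that in the violated regime this bracket is strictly negative. Since concavity places the unconstrained maximizer strictly below $\alpha_0$, the constrained maximum on $[\alpha_0,\infty)$ is attained at the boundary $\alpha_1 = \alpha_0$ and is therefore negative, falling short of the zero utility the specialist secures by not adopting. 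Hence $D$ strictly prefers to abstain, which completes the characterization; symmetrically, in the non-abstention regime the same boundary evaluation is nonnegative, so participation at $\alpha_1^*$ is individually rational and the interior solution is genuinely optimal.
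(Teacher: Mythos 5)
Your proposal is correct and follows essentially the same route as the paper's proof: establish strict concavity via $\partial^2 U_D/\partial\alpha_1^2 = -2\omega^{-1}$, solve the first-order condition for the interior optimum, and handle abstention by evaluating $U_D$ at the boundary $\alpha_1=\alpha_0$ and noting its sign coincides with the participation threshold. Your added remark tracking the sign of $c_\omega-\delta$ when dividing the inequality is a small but worthwhile rigor point the paper leaves implicit; otherwise the arguments match.
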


$D$ must choose a performance level $\alpha_1^{*} \geq \alpha_0$ for their adapted, or fine-tuned, model in the game.  The condition for $D$'s participation in Proposition \ref{prop:D_best_response}, $\omega \leq \frac{1-\delta}{c_\omega - \delta}$, is only satisfied when the costs of operating the model at baseline performance do not exceed $D$'s share of $r(\alpha_1)$. Using the specialist's best response, $G$'s strategy can then be solved in terms of the $\theta$ threshold.

\begin{proposition}[Characterization of $G$'s Equilibrium Strategy]\label{prop:G_best_response}
 In a game with quadratic costs and a monotonic revenue function $r(\alpha_1)$, G's preferred openness $\omega^*$ will be one of the values in the following set:
\begin{align*}
\omega^* &\in \Biggl\{ \frac{1}{3(\epsilon - \delta + c_\omega)(\delta + c_\omega)} \Bigl(  (\epsilon - \delta + c_\omega)(1-\delta) + (c_\omega - \delta)(\delta + c_\omega) \pm \\
&\quad \quad \Bigl[ \Bigl( (\epsilon - \delta + c_\omega) (1-\delta) + (c_\omega - \delta) (\delta + c_\omega)\Bigr) ^2 + \\
&\quad \quad 6(\epsilon - \delta + c_\omega) (\delta + c_\omega) \Bigl( (\epsilon - \delta - 1 + c_\omega) \alpha_0 + \frac{(\epsilon-c_\omega)(1-\delta)}{2}\Bigr) \Bigr]^{\frac{1}{2}} \Bigr),  0, \theta,  1 \Biggr\}
\end{align*}
$G$ will choose the value in the set which maximizes $U_G$.
$G$ will choose to abstain if and only if $U_G<0$ for all candidates in the set.
\end{proposition}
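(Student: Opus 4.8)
The plan is to reduce $G$'s problem to maximizing a single-variable cubic. Since $G$ moves before $D$ and the game is solved by backward induction, $G$ anticipates $D$'s best response from Proposition \ref{prop:D_best_response}. I would substitute $\alpha_1^{*}(\omega)=\alpha_0+\tfrac{\omega}{2}\big((1-\delta)+(\delta-c_\omega)\omega\big)$ (with $r$ the identity) into $U_G=(\epsilon\omega+\delta(1-\omega))\,r(\alpha_1)-\phi_G$. Because both the effective revenue share $\epsilon\omega+\delta(1-\omega)$ and the operation cost $c_\omega\alpha_1(1-\omega)$ are linear in $\omega$ while $\alpha_1^{*}$ is quadratic in $\omega$, the objective collapses to a cubic polynomial $Q(\omega)$ plus the piecewise-constant regulatory term, i.e. $U_G(\omega)=Q(\omega)-p\,\mathbbm{1}[\omega<\theta]$.

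The key observation is that the penalty is constant on each side of $\theta$, so it vanishes under differentiation and the interior first-order condition $Q'(\omega)=0$ is \emph{the same} quadratic on the closed region $[0,\theta)$ and the open region $[\theta,1]$. A cubic has a quadratic derivative, so $Q'(\omega)=0$ has at most two real roots, which I would extract explicitly via the quadratic formula; these are exactly the two candidates carrying the $\pm$ in the stated set. The remaining candidates arise from the boundary of the feasible interval and the single jump discontinuity: the endpoints $\omega=0$ and $\omega=1$, and the threshold $\omega=\theta$. Since $p\ge 0$, $U_G$ jumps \emph{upward} as $\omega$ crosses from below $\theta$ up to $\theta$, so the left limit $\theta^{-}$ is weakly dominated by $\theta$ itself — this is why $\theta$ (and not $\theta^{-}$) enters the set. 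Applying the standard ``interior critical point or endpoint'' argument separately to each compact piece and recombining, the global maximizer over $[0,1]$ must lie in the finite set consisting of the two roots of $Q'=0$ together with $0$, $\theta$, and $1$, and $G$ selects whichever yields the largest $U_G$ (including the $-p$ term for any candidate lying strictly below $\theta$).

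Finally I would dispatch feasibility and participation. Proposition \ref{prop:D_best_response} shows $D$ adopts only when $\omega\le\tfrac{1-\delta}{c_\omega-\delta}$; for larger $\omega$, $D$ abstains and no joint revenue is generated, so such $\omega$ are dominated and drop out of the comparison. For $G$'s own participation, the outside option of not releasing is normalized to zero utility, so $G$ releases at the best candidate when that candidate's $U_G$ is nonnegative and otherwise abstains, yielding the stated criterion that $G$ abstains if and only if $U_G<0$ at every candidate.

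I expect the main obstacle to be the bookkeeping in the reduction step: correctly expanding the product of the linear share/cost factor with the quadratic $\alpha_1^{*}$, collecting the cubic's coefficients, and simplifying the two quadratic-formula roots into the compact closed form displayed in the statement — in particular matching the common denominator and the discriminant inside the radical. A secondary subtlety is the rigorous treatment of the discontinuity at $\theta$, since the usual calculus argument presupposes continuity; it must be run on $[0,\theta)$ and $[\theta,1]$ independently and then merged, taking care that the supremum on the half-open piece is either attained or dominated by an attained point. One should also isolate the degenerate cases where the leading coefficient of $Q'$ vanishes (e.g. $\epsilon-\delta+c_\omega=0$ or the corresponding factor of the denominator is zero), for which the objective is only quadratic or linear and fewer interior critical points arise; these are handled by direct inspection.
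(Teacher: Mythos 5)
Your proposal follows essentially the same route as the paper's proof: substitute $D$'s best response into $U_G$ to get a cubic in $\omega$ plus the piecewise-constant penalty, derive the quadratic first-order condition whose two roots give the $\pm$ candidates, add the boundary points $0$, $\theta$, $1$, and dismiss the left limit $\theta^{-}$ because the utility jumps up by $p$ at the threshold. Your additional remarks on degenerate leading coefficients and on running the endpoint argument separately on each compact piece are sound refinements of the same argument, so no substantive difference to report.
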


\subsection{Generalist Strategies at Equilibrium}

We begin by characterizing $G$'s openness strategies at equilibrium for a two-player game under different bargaining agreements without regulation ($p=0$ and $\theta$ has no effect). The Nash bargaining solution calculates joint utility as the product of $U_D$ and $U_G$, the vertical monopoly (VM) solution as the sum, and the egalitarian solution as the minimum of $U_D$ and $U_G$.

\textbf{Bargaining Effects.} The $\delta^*$ value which maximizes a specific measure of joint utility is identified via grid search, as an analytical solution for the equilibrium bargaining agreement is intractable with multiple game parameters and $\omega^*$ selected from a set.  Figure \ref{fig:no_penalty_equilibrium_strategies} visualizes $G$ and $D$'s equilibrium strategies over initial model performance ($\alpha_0$) and reputational benefits ($\epsilon$) when no regulation is imposed. $G$ chooses a fully open model when the bargained revenue-sharing agreement is $\delta^*=0$, but chooses a fully closed model when $\alpha_0$ is high and $\delta^*$ is moderate. Counterintuitively, $G$ chooses intermediate openness when $\delta^*$ is maximized and the bargained revenue-sharing outcome for a closed model strongly favors $G$. This intermediate openness level empowers $D$ to adopt and improve the model  at low $\alpha_0$, which generates greater total revenue than would be possible under a fully closed approach. Full plots of the equilibrium outcomes with variations to $\alpha_0, \epsilon$, and $c_\omega$ are in Appendix \ref{sec:additional_figures}.

\begin{figure}[!ht]
\centering
\includegraphics[width=\textwidth]{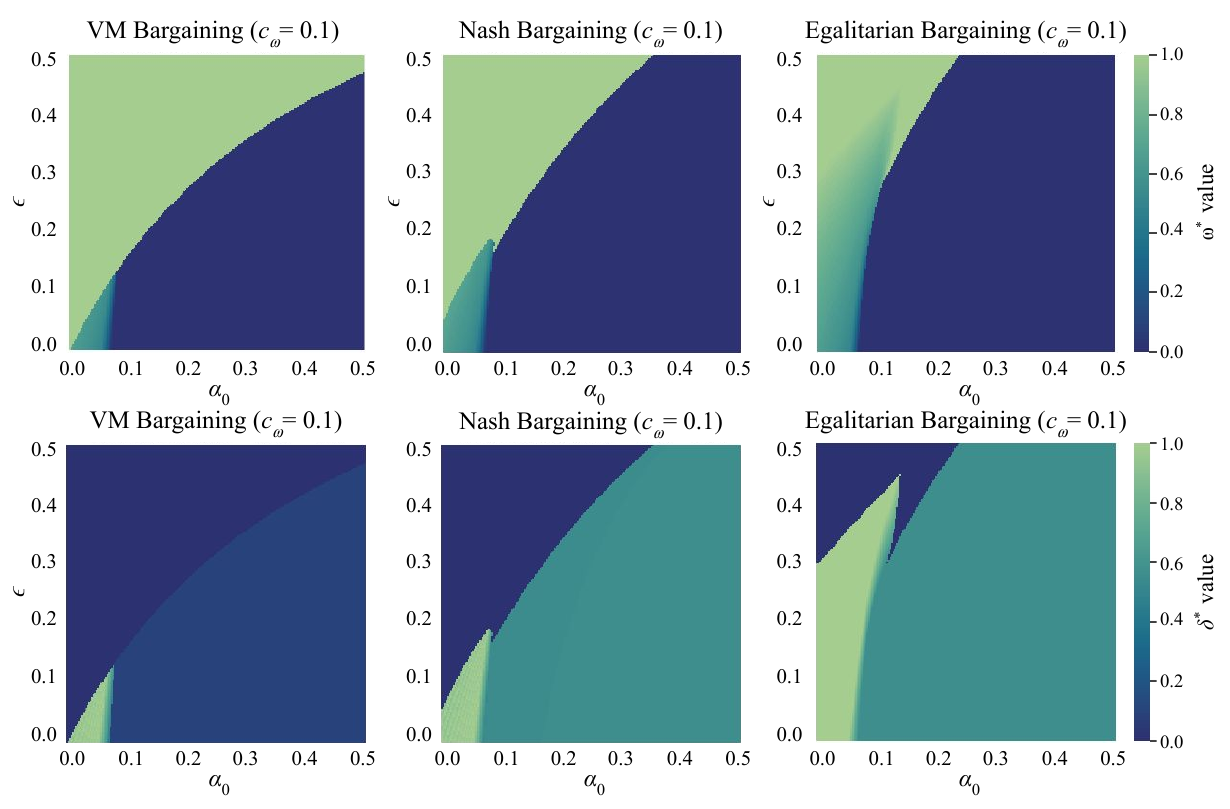}
\caption{$G$'s equilibrium strategies for $\omega^*$ (top row) and $\delta^*$ (bottom row) at $c_\omega = 0.1$ with no regulation $(p = 0)$.  Low initial performance ($\alpha_0$) and high reputational benefits ($\epsilon$) lead to partial openness. Since $G$ unilaterally controls the model's release strategy, $G$ can use the openness decision to remove or weaken the bargain, explaining why a fully open model coincides with bargains that allocate no closed revenue to $G$.}
\label{fig:no_penalty_equilibrium_strategies}
\end{figure}

In general, $G$ prefers a fully closed strategy as the initial model performance $\alpha_0$ increases, even for higher $\epsilon$ values, because the performance is high enough to generate considerable subscription and licensing revenue via $r(\alpha_1)$ without specialist improvements.

\begin{figure}[!ht]
\centering

\begin{minipage}[c]{0.51\textwidth}
\centering
\includegraphics[width=\linewidth]{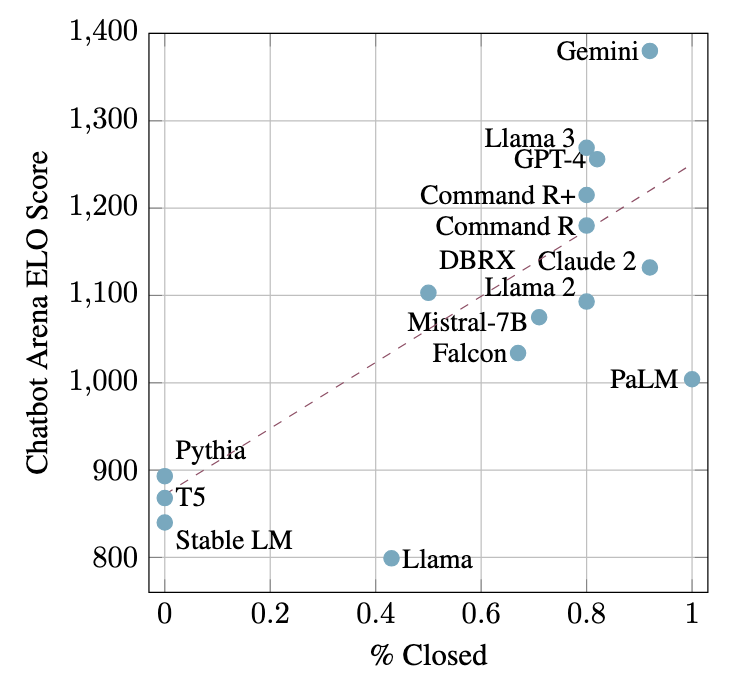}
\end{minipage}\begin{minipage}[c]{0.47\textwidth}
  \centering \includegraphics[width=\linewidth]{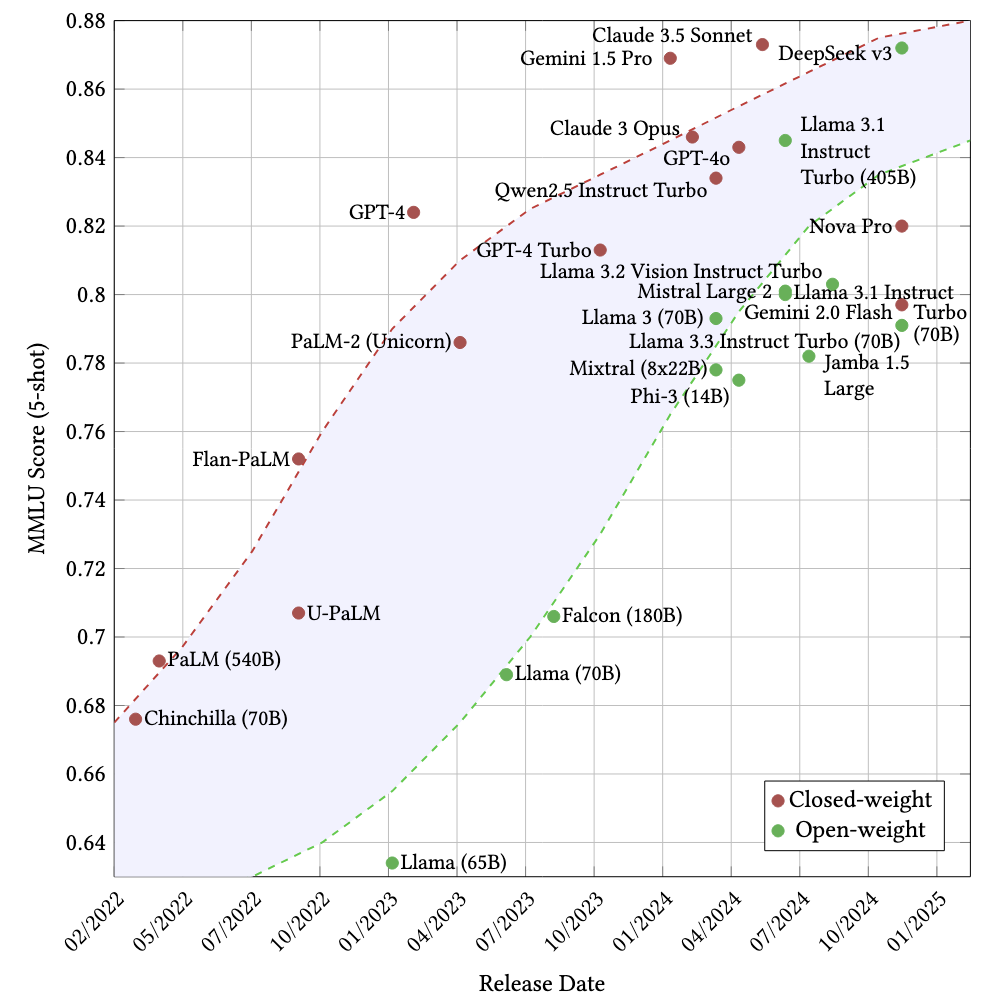}
\end{minipage}

\caption{Overview of empirical model release strategies. \textbf{(Left)} Higher performance of a model corresponds to a higher percentage of closed components, determined by Eiras et al.'s assessment of model components \cite{eiras2024}). \textbf{(Right)}   For each generation, closed-weight models have higher performance than open-weight ones, despite open-weight models showing comparable performance to closed-weight models from previous time periods. The performance gap between open- vs. closed-weight models (blue region) persists even as absolute performance improves across generations. Appendix \ref{sec:data_for_figures} reports the figure data.}
\label{fig:openness_vs_perf}
\end{figure}

\textbf{Openness-Performance Tradeoff.} $G$'s no-regulation equilibrium strategies corroborate empirical patterns of model release, where models with lower performance are relatively more open access and  closed models outpace open-weight models by several months in terms of performance (Figure \ref{fig:openness_vs_perf}). The openness level chosen by $G$ at equilibrium from our main theoretical result quantifies the observed tradeoff between performance and openness in release strategies in terms of $\epsilon$ and $\alpha_0$, as well as the $\delta$ value bargained by players. When $\alpha_0$ and $\epsilon$ are both low, $G$ chooses intermediate openness levels to reduce the specialist's costs because the model is not performant enough to drive adoption at full cost under a closed approach. This outcome corresponds to proprietary decisions to release underperforming models with partial open access, such as Google releasing lower-capability open variants alongside high-performing closed models  (Figure \ref{fig:openness_vs_perf}).

\section{Analysis of Regulation Effects}\label{sec:regulation_effects}

To identify effective open-source interventions, we characterize the range of regulatory effects across initial model performances and identifies specific penalty and threshold configurations required to achieve compliance. The space of key constants $\epsilon, \delta, c_\omega$ was systematically probed  (details in Appendix \ref{sec:additional_figures})  and in what follows, we establish the existence of three qualitatively different regulatory outcomes. 

\subsection{Regulation Effects on Openness Decisions}

The boundary where $G$'s openness decision transitions from a fully closed model to a partially open one represents the \emph{indifference curve} where $G$'s utility for maintaining a fully closed model ($\omega \approx 0$) is equal to the utility for opening the model. For any $(p, \theta)$ combination below this curve, $G$ will choose an openness  level that meets the threshold. We provide a closed-form expression for this curve that identifies which combinations of $(p, \theta)$  will drive changes to $G$'s openness strategy. Proofs are in Appendix \ref{sec:regulation_formalisms}.

\begin{figure}[!ht]
\includegraphics[width=\textwidth]{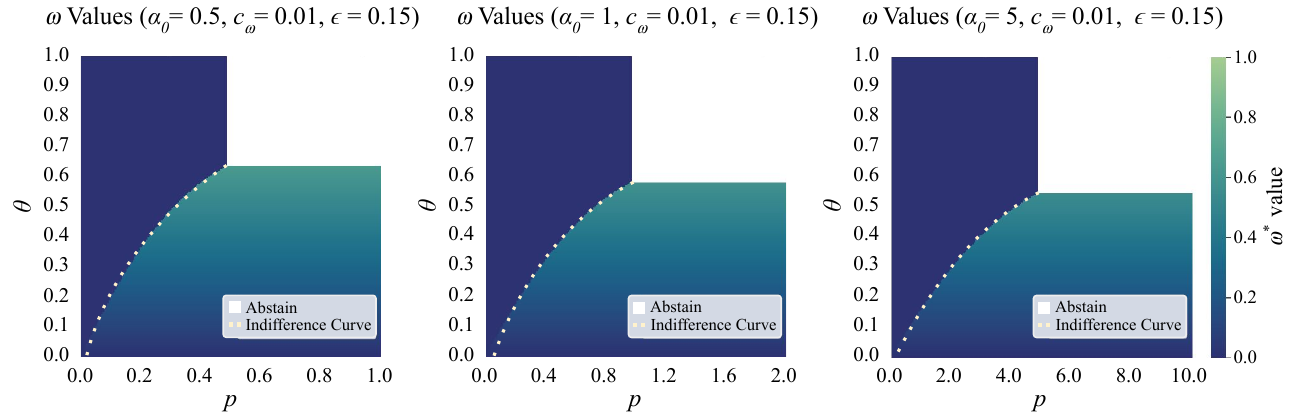}
\caption{Indifference curves for the generalist over $(p, \theta)$ choices for game parameters $c_\omega=0.01, \epsilon=0.15$ and $\alpha_0 \in \{0.5, 1, 5\}$. In the region of non-compliance above the indifference curve, $G$ keeps the model at an openness level $\omega^* \rightarrow 0$. In the area of compliance below the indifference curve, $G$ chooses  $\omega^* = \theta$.}
\label{fig:indif_curve}
\end{figure}

\begin{proposition}
 The indifference curve separating fully closed from partially open strategies in the space of regulation profiles $(p, \theta)$ is defined by $p = \theta ( \delta + \frac{\alpha_0 }{\alpha_1} - \epsilon - c_\omega ) \alpha_1$. For any $p < \theta ( \delta + \frac{\alpha_0 }{\alpha_1} - \epsilon - c_\omega ) \alpha_1$, $G$ fully closes the model.
\label{prop:indif_curve}
\end{proposition}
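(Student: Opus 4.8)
The plan is to identify the two strategies whose payoffs meet along the indifference curve and equate the corresponding generalist utilities. In the regime of interest --- the high-$\alpha_0$ regime where, absent regulation, $G$ strictly prefers a fully closed model by the analysis of Section \ref{sec:game-fixed} --- Proposition \ref{prop:G_best_response} tells us that once a penalty is present the only two candidates $G$ ever weighs against each other are (i) the fully closed model $\omega \to 0$, which lies below the threshold and therefore incurs the penalty $p$, and (ii) the minimally compliant model $\omega = \theta$, which is the best response inside the compliant region $[\theta, 1]$ because $U_G$ is decreasing in $\omega$ there whenever the unconstrained interior optimum falls below $\theta$. So the first step is to argue, from the candidate set in Proposition \ref{prop:G_best_response}, that the compliant optimum is exactly $\omega = \theta$ and the non-compliant optimum is $\omega \to 0$, and that every other candidate in the set is dominated in this regime.

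Next I would write down $U_G$ at each of these two points using $r(\alpha_1) = \alpha_1$. At $\omega \to 0$ the revenue coefficient $\epsilon\omega + \delta(1-\omega)$ collapses to $\delta$, the production term $\alpha_0\omega$ vanishes, the operation term $c_\omega\alpha_1(1-\omega)$ becomes $c_\omega\alpha_1$, and the indicator switches on the penalty, giving $U_G^{\mathrm{closed}} = (\delta - c_\omega)\alpha_1 - p$. At $\omega = \theta$ the indicator is off and $U_G^{\mathrm{open}} = (\epsilon\theta + \delta(1-\theta))\alpha_1 - \alpha_0\theta - c_\omega\alpha_1(1-\theta)$. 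The indifference curve is the locus where $U_G^{\mathrm{closed}} = U_G^{\mathrm{open}}$; solving this linear equation for $p$ and collecting terms, the $\delta\alpha_1$ and $c_\omega\alpha_1$ contributions cancel between the two branches, leaving $p = \theta(\delta\alpha_1 + \alpha_0 - \epsilon\alpha_1 - c_\omega\alpha_1) = \theta(\delta + \frac{\alpha_0}{\alpha_1} - \epsilon - c_\omega)\alpha_1$, as claimed. The inequality then follows immediately from the sign of the difference $U_G^{\mathrm{closed}} - U_G^{\mathrm{open}} = \theta(\delta + \frac{\alpha_0}{\alpha_1} - \epsilon - c_\omega)\alpha_1 - p$: the fully closed model is strictly preferred precisely when $p < \theta(\delta + \frac{\alpha_0}{\alpha_1} - \epsilon - c_\omega)\alpha_1$, and $G$ is pushed to comply once $p$ exceeds this value.

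The step I expect to be the main obstacle is not the algebra --- which is routine once the two endpoints are fixed --- but justifying the two modeling choices that make the endpoints well defined. First, I must confirm via Proposition \ref{prop:G_best_response} that in this regime no interior root or fully open candidate beats $\{\omega \to 0,\ \omega = \theta\}$, so that the indifference boundary really is a two-way comparison and not a comparison among three or more local optima. Second, and more delicate, is the treatment of the realized performance $\alpha_1$: the expression is clean only because both branches are evaluated at a common performance level $\alpha_1$, so I would state explicitly that the curve is a \emph{ceteris paribus} comparison holding the fine-tuned performance fixed --- isolating the effect of openness on the revenue split and on operation versus production costs --- rather than substituting the two distinct specialist best responses $\alpha_1^*(\theta)$ and $\alpha_1^*(\omega \to 0) \to \alpha_0$ given by Proposition \ref{prop:D_best_response}. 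Making this assumption precise and checking that it is consistent with the equilibrium concept used to generate Figure \ref{fig:indif_curve} is where the real care is needed.
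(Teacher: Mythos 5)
Your proposal is correct and follows essentially the same route as the paper: the appendix proof likewise equates $U_G(\omega \approx 0)$ (penalized, $\delta\alpha_1 - c_\omega\alpha_1 - p$) with $U_G(\omega = \theta)$ (compliant), holding a common $\alpha_1$ in both branches, and solves the linear equation for $p$. The two caveats you flag --- that the two-way comparison must dominate the other candidates from Proposition \ref{prop:G_best_response}, and that the curve is a \emph{ceteris paribus} comparison at a fixed $\alpha_1$ rather than substituting the two distinct specialist best responses --- are both left implicit in the paper, so your version is if anything more careful.
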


\textbf{Regulation can lead to deadweight loss.} The area above the indifference curve  illustrates an outcome where openness regulation lowers both players' utilities without yielding any openness improvements (Figure \ref{fig:indif_curve}). There are no corresponding improvements to $\alpha_1$ in this region because $D$ cannot contribute via fine-tuning efforts. Above the indifference curve, any increases to the penalty directly deduct from $G$'s utility because the model stays fully closed. $D$'s revenue share is also reduced because the bargain shifts in $G$'s favor as the penalty is increased. Extreme combinations of $(p, \theta)$ do not just produce deadweight loss but cause $G$ to withdraw from the market entirely by abstaining, which represents the worst-case outcome because no positive utility can be generated.

\subsection{Regulation Effects With Compliance}

\begin{center}
\begin{figure}[!ht]
\includegraphics[width=\textwidth]{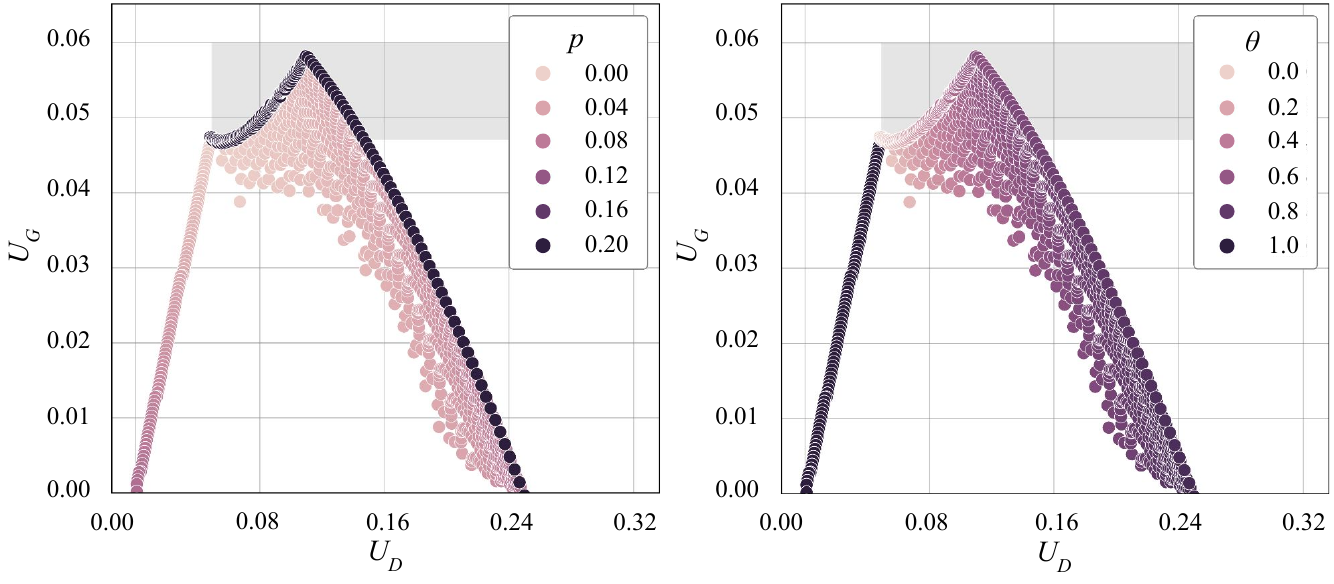}
\caption{Player utilities under various $(p, \theta)$ regulations when $\alpha_0 = 0.1, c_\omega = 0.05$, and $\epsilon=0.1$ with Nash bargaining. $(U_G, U_D)$-improving regulations are enclosed in the gray region, with the lower left corner of the gray region corresponding to the no-regulation equilibrium where $p=0$ and $\theta$ has no effect. For overly stringent $\theta$ thresholds, both players' utilities decrease. }
\label{fig:utilities_scatterplot}
\end{figure}
\end{center}

\begin{definition}[Pareto-Optimal Policies]
Given a vector of $d$ utility functions $\mathbf{u} \in \mathbb{R}^d$, the Pareto-optimal policy is the regulatory profile $(\theta, p)$ that maximizes the following optimization problem: ${\arg\max}_{p, \theta}\  w^\top \mathbf{u}\ s.t.\ \|w\|_1=1, w_i>0, \forall i \in [d].$ The Pareto-optimal policies optimize some weighting over the objectives, where the weight vector $w$ sits on the standard $d$-simplex.
\label{def:pareto_optimal}
\end{definition}

\textbf{Regulation can lead to Pareto improvement over ($\omega, \alpha_1, U_G, U_D$).} When initial performance ($\alpha_0$) is low, there is a region of regulation where $G$ complies with the open-source threshold but would fully close otherwise. This intermediate openness level is mutually beneficial because $D$ makes significant model improvements that increase the total amount of surplus allocated by the bargain. Figure \ref{fig:regulation_benefits} presents one set of parameters where a region of regulation achieves Pareto improvement over ($\omega, \alpha_1, U_G, U_D$), guiding players toward a mutually beneficial equilibrium that is unachievable without imposing a penalty. 

The mutually beneficial openness level is not chosen in the absence of regulation because bargaining alone does not provide a credible commitment mechanism. There is no assurance that $G$ will keep the model open rather than fully closing it to maximize its own $\delta$-share if $D$ concedes a greater revenue share. Cheap shot strategies that are individually beneficial to $G$ are constrained by regulation; the penalty ensures $G$ will at least open the model at the threshold under certain conditions, and $D$ can safely offer a more favorable bargain to secure $G$'s cooperation. In Figure \ref{fig:regulation_mixed_effects}, the no-regulation bargaining outcome is roughly balanced ($\delta^* \approx 0.5$), but introducing a penalty $p > 0$ changes the bargaining outcome so that $\delta^* \approx 1$ as the threshold increases below the indifference curve.

\textbf{Regulation can encourage specialist innovation.}
Regulation that falls below the indifference curve redistributes utility from $G$ to $D$ when $\alpha_0$ is high enough for $G$ to sacrifice licensing and subscription revenue by choosing higher levels of openness. Figure \ref{fig:regulation_mixed_effects} illustrates how increasing the open-source threshold below the indifference curve transfers utility from $G$ to $D$ without deterring $G$'s participation. Higher openness levels in this region raise $\alpha_1$ by reducing production costs for $D$, but $G$ loses closed revenue. For instance, Meta loses licensing revenue by developing and releasing open-weight models, but cheap access to advanced models allows specialists to develop derivatives like Vicuna and Alpaca \cite{vicuna2023, alpaca2024} which can surpass the original Llama models in performance. 

\begin{figure}[!ht]
\centering
\includegraphics[width=\textwidth]{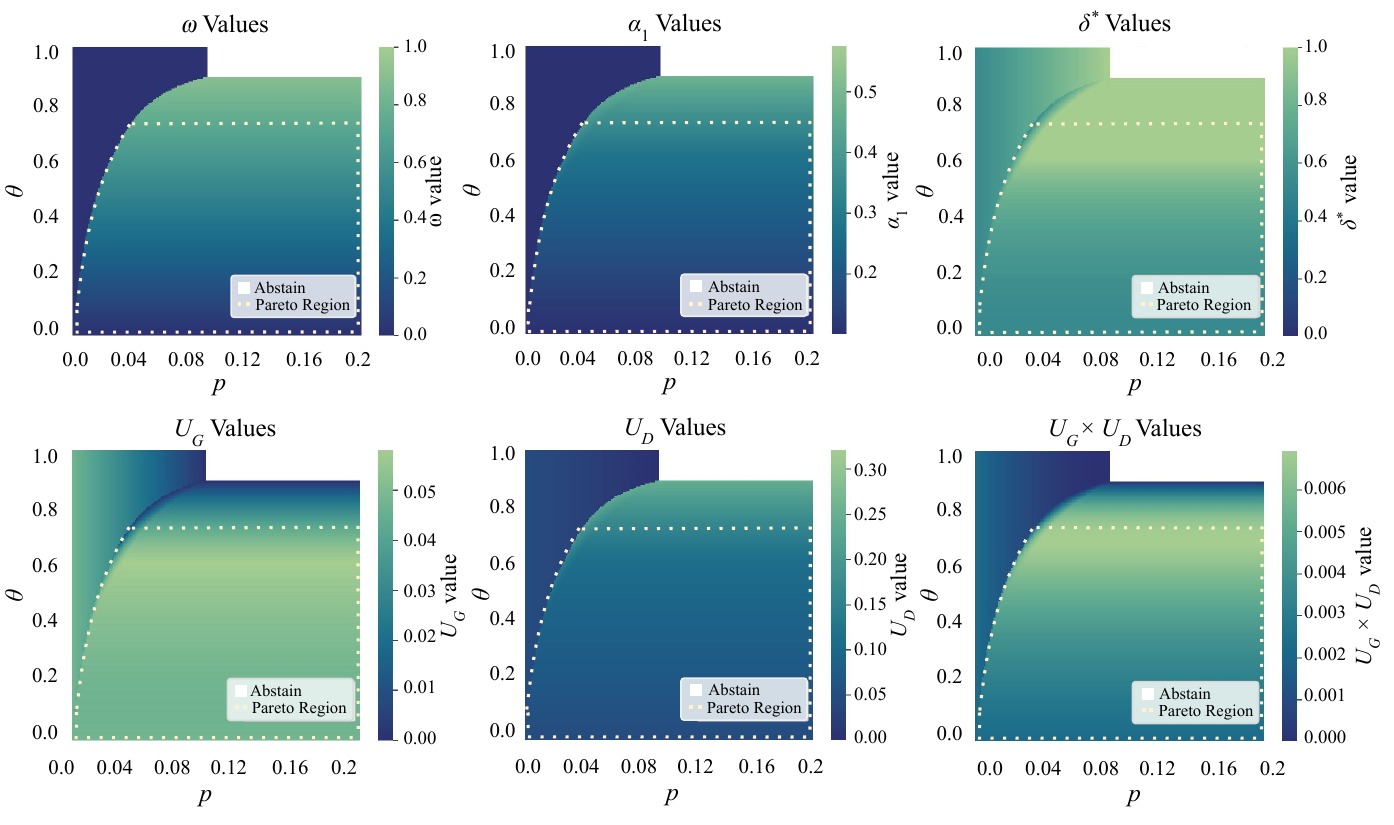}
\caption{Equilibrium outcomes  for $\alpha_0 = 0.1, c_\omega = 0.01$, $\epsilon = 0.1$ with Nash bargaining show Pareto improvement over utilities $(\omega, \alpha_1, U_G, U_D)$ for any $(p, \theta)$ regulation in the dotted region. At the no-regulation equilibrium ($p=0$), players' decisions are $\omega \approx 0$ and $\alpha_1 = 0.102$.}
\label{fig:regulation_benefits}
\end{figure}

\begin{figure}[!ht]
\includegraphics[width=\textwidth]{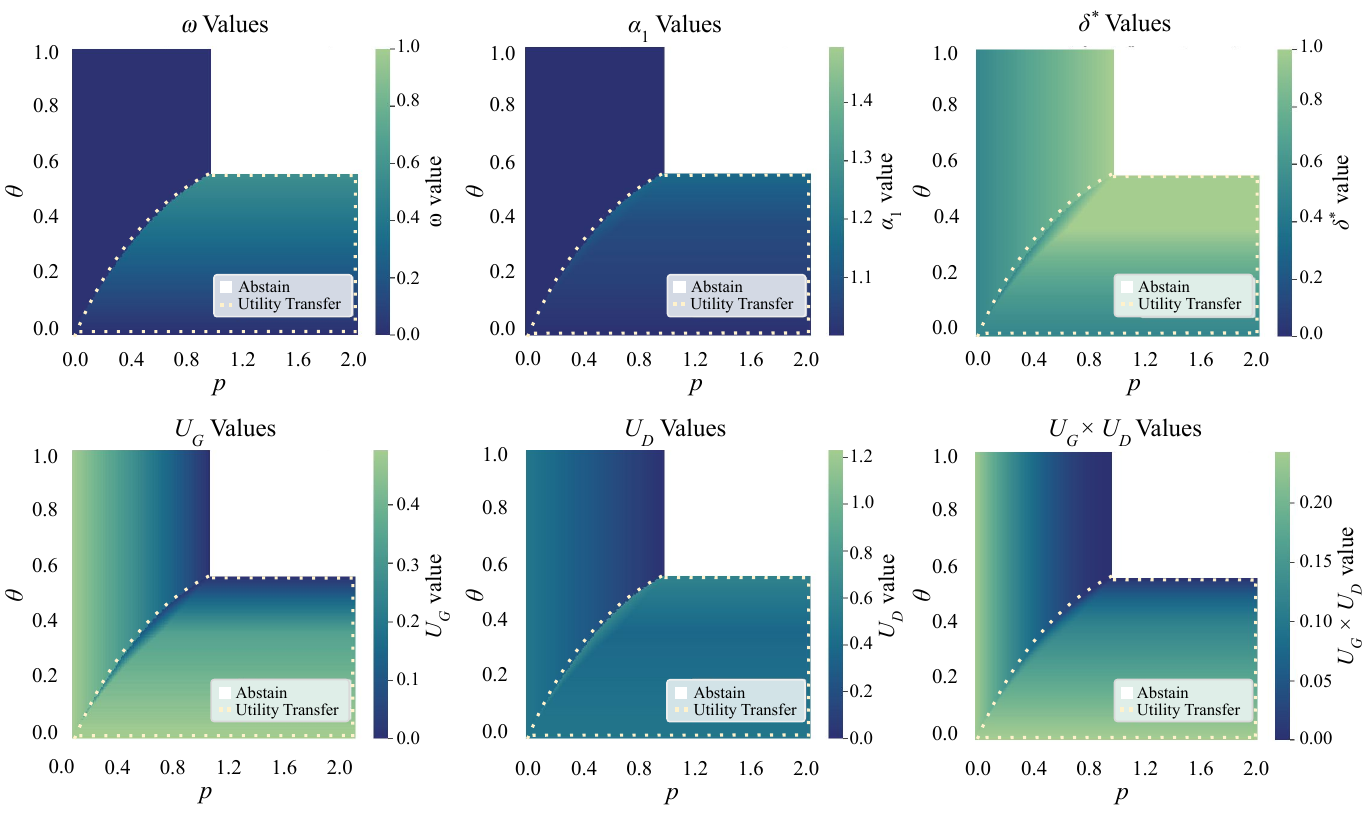}
\caption{Equilibrium outcomes for $\alpha_0 = 1, c_\omega = 0.01$, $\epsilon = 0.1$ with Nash bargaining create a region  where raising the open-source threshold improves $\omega$, $\alpha_1$, and $U_D$ but lowers $U_G$. This utility transfer  encourages specialist innovation, measured by $\alpha_1$. At the no-regulation equilibrium ($p=0$), players' decisions are $\omega \approx 0.01$ and $\alpha_1 = 1$.} 
\label{fig:regulation_mixed_effects}
\end{figure}
\section{Discussion}\label{sec:discussion}

This work introduces an openness decision into the generalist's strategy space, modeling how this parameter affects strategic equilibria in foundation model development. While existing models of openness regulation assume compliance is mandatory and all model developers make homogeneous openness decisions \cite{xu2024}, our setup allows  generalists to select a continuous openness level based on regulation intensity and model performance. The model we present therefore explains the strategic positioning of AI models along an openness-performance frontier and anticipates how adjustments to open-source thresholds will impact fine-tuning and bargaining dynamics.

\subsection{Implications for Open-Source AI Regulation}
\textbf{Regulatory approaches to openness should shift with average model performance to empower innovation.} Low initial performance creates opportunities for Pareto improvements through openness regulation, while high performance at most permits the strategic redistribution of utility between generalists and specialists. As model capabilities evolve, regulatory strategies may need to shift to redistribution approaches that target the specialist's willingness to contribute to model performance.  This targeted approach can address competitive imbalances between generalists and specialists, even in environments where a few generalists dominate the offerings of high-performance foundation models.  Given that models may excel in certain domains while underperforming in others, regulators must also determine whether to use composite performance metrics, domain-specific benchmarks, or capability-weighted averages to define model performance, with each approach reflecting different model development priorities. 

\textbf{Effective open-source thresholds require calibrated penalties.} Regulators should consider enforcement costs in regions where higher thresholds cannot influence the generalist's openness decision. For regions above the indifference curve, higher thresholds cause wasted enforcement costs on evaluation and monitoring of additional closed model components, without any corresponding improvements to model openness. Resources allocated to open-source regulation should be concentrated near the indifference curve, and enforcement approaches should scale the regulatory penalty based on specific model characteristics like baseline performance or operation costs to ensure that all generalists face real compliance incentives. This prevents scenarios where developers of high-performing models absorb regulatory penalties as a business expense rather than modifying their transparency practices.

\subsection{Limitations and Future Work}

Our model abstracts from several behavioral and competitive dynamics that can affect the theoretical results presented in this paper. For example, regulatory thresholds can create psychological anchoring that influences behavior independent of the penalty, as firms target the threshold even if the penalty is minimal. Firms may also feel competitive pressure to meet the threshold if they believe that competitors will meet it, creating norms that transcend direct cost calculation of the penalty. Modeling multi-firm competition with a closed-source incumbent can also reveal when established firms pivot to open-source strategies and how timing of market entry affects whether entrants adopt open-source approaches.  

The bargaining process may deviate similarly from axiomatic solutions. Our analysis mainly examines Nash bargaining, which uniquely satisfies independence of utility origins and units, Pareto efficiency, symmetry, and independence of irrelevant alternatives \cite{mas-colell1995}. The two alternative bargaining solutions we consider (VM and Egalitarian) are also Paretian, symmetric, and independent of utility origins. However, generalists and specialists may arrive at non-standard bargaining agreements in practice which violate these properties, especially when utilities are expressed in currency and players have bounded rationality  \cite{felsenthal1982}.

While the main analysis assumes that higher model openness and performance are generally beneficial, a regulator also considers safety risks associated with higher performance or market-wide loss scenarios, such as intermediate levels of openness where consumers may be worse off \cite{xu2024}. This complexity suggests that defining openness as a multidimensional property may be valuable to capture factors beyond simple model access, such as documentation, licensing, and explainability of model outputs \cite{laufer2025anatomy}.

\bibliography{refs}
\bibliographystyle{plain}
\newpage
\section*{NeurIPS Paper Checklist}


\begin{enumerate}

\item {\bf Claims}
    \item[] Question: Do the main claims made in the abstract and introduction accurately reflect the paper's contributions and scope?
    \item[] Answer: \answerYes{}
    \item[] Justification: The abstract and introduction state the paper's contributions, and important assumptions are discussed in the model setup and conclusion.

\item {\bf Limitations}
    \item[] Question: Does the paper discuss the limitations of the work performed by the authors?
    \item[] Answer: \answerYes{}
    \item[] Justification: The conclusion points out how assumptions might be violated in practice and discusses limitations.

\item {\bf Theory assumptions and proofs}
    \item[] Question: For each theoretical result, does the paper provide the full set of assumptions and a complete (and correct) proof?
    \item[] Answer: \answerYes{}
    \item[] Justification: Proofs are in the appendix. All theoretical results are numbered and cross-referenced, with assumptions stated. 

    \item {\bf Experimental result reproducibility}
    \item[] Question: Does the paper fully disclose all the information needed to reproduce the main experimental results of the paper to the extent that it affects the main claims and/or conclusions of the paper (regardless of whether the code and data are provided or not)?
    \item[] Answer:  \answerYes{}
    \item[] Justification: All parameters and equations necessary to reproduce the experimental results are reported.

\item {\bf Open access to data and code}
    \item[] Question: Does the paper provide open access to the data and code, with sufficient instructions to faithfully reproduce the main experimental results, as described in supplemental material?
    \item[] Answer: \answerYes{}
    \item[] Justification: \href{https://drive.google.com/file/d/1Jrrk7dA9rUUwOqvcqm8SAjvShR8tjPT7/view?usp=sharing}{Code} for reproducing plots. 

\item {\bf Experimental setting/details}
    \item[] Question: Does the paper specify all the training and test details (e.g., data splits, hyperparameters, how they were chosen, type of optimizer, etc.) necessary to understand the results?
    \item[] Answer: \answerYes{}
    \item[] Justification:  The details are in the appendix and figure captions.

\item {\bf Experiment statistical significance}
    \item[] Question: Does the paper report error bars suitably and correctly defined or other appropriate information about the statistical significance of the experiments?
    \item[] Answer: \answerYes{}
    \item[] Justification: The experiments do not have any variability in results. The paper reports the exact parameters that were tested.

\item {\bf Experiments compute resources}
    \item[] Question: For each experiment, does the paper provide sufficient information on the computer resources (type of compute workers, memory, time of execution) needed to reproduce the experiments?
    \item[] Answer: \answerYes{} 
    \item[] Justification: The experiments can be run on a single CPU (anywhere from 10 to 40 minutes for a simulation, depending on the number of parameters checked).
    
\item {\bf Code of ethics}
    \item[] Question: Does the research conducted in the paper conform, in every respect, with the NeurIPS Code of Ethics \url{https://neurips.cc/public/EthicsGuidelines}?
    \item[] Answer: \answerYes{}
    \item[] Justification: The authors reviewed the guidelines.

\item {\bf Broader impacts}
    \item[] Question: Does the paper discuss both potential positive societal impacts and negative societal impacts of the work performed?
    \item[] Answer: \answerYes{}
    \item[] Justification: The societal implications of the work are discussed throughout the paper.

\item {\bf Safeguards}
    \item[] Question: Does the paper describe safeguards that have been put in place for responsible release of data or models that have a high risk for misuse (e.g., pretrained language models, image generators, or scraped datasets)?
    \item[] Answer: \answerNA{}
    \item[] Justification: The paper does not pose any risks for misuse or dual-use. 

\item {\bf Licenses for existing assets}
    \item[] Question: Are the creators or original owners of assets (e.g., code, data, models), used in the paper, properly credited and are the license and terms of use explicitly mentioned and properly respected?
    \item[] Answer: \answerYes{}
    \item[] Justification: The paper does not use existing assets with licenses.

\item {\bf New assets}
    \item[] Question: Are new assets introduced in the paper well documented and is the documentation provided alongside the assets?
    \item[] Answer: \answerNA{}
    \item[] Justification: The paper does not release new assets.

\item {\bf Crowdsourcing and research with human subjects}
    \item[] Question: For crowdsourcing experiments and research with human subjects, does the paper include the full text of instructions given to participants and screenshots, if applicable, as well as details about compensation (if any)? 
    \item[] Answer: \answerNA{}
    \item[] Justification: 
    The paper does not involve research with human subjects.

\item {\bf Institutional review board (IRB) approvals or equivalent for research with human subjects}
    \item[] Question: Does the paper describe potential risks incurred by study participants, whether such risks were disclosed to the subjects, and whether Institutional Review Board (IRB) approvals (or an equivalent approval/review based on the requirements of your country or institution) were obtained?
    \item[] Answer: \answerNA{}
       \item[] Justification: 
    The paper does not involve crowdsourcing or human research subjects.

\item {\bf Declaration of LLM usage}
    \item[] Question: Does the paper describe the usage of LLMs if it is an important, original, or non-standard component of the core methods in this research? Note that if the LLM is used only for writing, editing, or formatting purposes and does not impact the core methodology, scientific rigorousness, or originality of the research, declaration is not required.
    \item[] Answer: \answerNA{}
    \item[] Justification: 
   This paper does not use LLMs as an important, original, or non-standard component.
  
\end{enumerate}
\newpage

\appendix
\section{Additional Related Work}\label{sec:additional_related_work}

\textbf{Models of AI Regulation.} Numeric thresholds dictate in practice which ``covered'' products are subject to regulatory standards. The number of floating point operations (FLOPs), the compute cost based on the average market price of the cloud compute, net annual sales, and the number of active monthly users appear as representative thresholds in AI-related legal measures \cite{DCP}. Although the accuracies of these thresholds are disputed \cite{heim}, a binary qualification system for rewards may yield more aggregate effort than a continuous reward structure or rewards based on relative performance, as incentives to improve performance intensify with proximity to the threshold \cite{grant2016}.
Existing works on the effect of AI requirements on market dynamics model how firms strategically respond to targeted safety regulations \cite{laufer2025} and how reputational pressures to satisfy safety constraints facilitate entry of new developers \cite{jagadeesan2024}. 
 
 \textbf{Openness \& Market Strategy.} 
 Technology providers can gain a competitive advantage by differentiating through price and performance using one of two strategies: versioning and portfolio broadening \cite{giarratana2007}. Product versioning refers to the release of multiple iterations of a product within the same market segment, typically by developing closed models that compete primarily on performance quality. In contrast, portfolio broadening involves expanding product scope to cover several market segments, such as introducing cheaper, open variants of models. Developers of open models can acquire greater market share by monetizing paid premium models and services at a later stage (e.g., tiered pricing and commercial licensing) or by inflating demand for complementary goods, using approaches analogous to product seeding or loss leading \cite{milgrom1982}. 

\textbf{Foundation Model Economics.} Paid, pretrained base models with distributed fine-tuning efforts create a tiered market structure that constrains downstream improvements based on access terms set by the generalist. For example, fully contractible token allocations give the specialist full freedom over how to allocate a budget to fine-tuning, whereas per-task token pricing limits the specialist's fine-tuning options \cite{bergemann2025}. Theoretical models of foundation model development assess how this production structure affects regulatory efforts \cite{laufer2025, hopkins2025} and market competition \cite{ali2025, raghavan2024}, while complementary works focused on provider-side strategy derive pricing recommendations for API endpoints and training data \cite{bergemann2025, velasco2025, zhang2025_fairshare} that align the incentives of model developers with contributors.

\section{Derivation of Subgame Perfect Equilibria}\label{sec:solution_derivations}

\subsection{Solution for Fixed Initial Performance} \label{subsec:spe_fixed_perf}
The game is solved by backward induction using the functions from Table \ref{table:utility_functions}. 
\begin{proof}[Proof of Proposition \ref{prop:D_best_response}.]
\label{proof:D_best_response}

First, $U_D$ is strictly concave.
\begin{align*}
\frac{\partial^2 U_D}{\partial {\alpha_1}^2} &=   \frac{\partial^2}{\partial {\alpha_1}^2}  \Bigl( (1-\delta (1-\omega))\underbrace{r(\alpha_1)}_{r(\alpha_1)=\alpha_1} - (\alpha_1 - \alpha_0)^2 \omega^{-1} - c_\omega \alpha_1 \omega \Bigr) \\
&= \frac{\partial}{\partial \alpha_1} \Bigl(1-\delta(1-\omega) - 2 (\alpha_1 - \alpha_0) \omega^{-1} - c_\omega \omega \Bigr) \\
&= -2 \omega^{-1 }. 
\end{align*}

$\frac{\partial^2 U_D}{\partial {\alpha_1}^2} < 0$ because $\omega \in (0, 1]$ by definition. Since any local maximum is a global maximum, the critical point of $U_D$ gives $D$'s best response.

\begin{align*}
\alpha_1^* &= \underset{\alpha_1}{\mathrm{argmax}}\, U_D (\alpha_0) \\
&\Rightarrow \frac{\partial U_D}{\partial \alpha_1} \Bigr|_{\alpha_1 = \alpha_1^*}= 0 \\
&\Rightarrow \frac{\partial}{\partial \alpha_1} \Bigl( (1-\delta (1-\omega))\underbrace{r(\alpha_1)}_{r(\alpha_1)=\alpha_1} - \phi_{D} (\alpha_1 | \omega, \alpha_0) \Bigr)\Bigr|_{\alpha_1 = \alpha_1^*}=0 \\
&\Rightarrow \frac{\partial}{\partial \alpha_1} \Bigl( (1-\delta (1-\omega)) \alpha_1 - (\alpha_1 - \alpha_0)^2 \omega^{-1} - c_\omega \alpha_1 \omega  \Bigr)\Bigr|_{\alpha_1 = \alpha_1^*} =0  \\
&\Rightarrow 1 - \delta  (1-\omega) - 2 ({\alpha_1}^* - \alpha_0) {\omega}^{-1} - c_\omega \omega   = 0\\
&\Rightarrow \frac{\omega (1 - \delta (1-\omega) - c_\omega \omega) }{2} + \alpha_0.
\end{align*}

When the condition $\omega \leq \frac{1 - \delta}{c_\omega - \delta}$ is satisfied, the critical point is feasible and optimal. In cases where the condition is not satisfied, $U_D$ is negative in all cases. Using the fact that $U_D$ is strictly concave, if the critical point is negative, the numerical optimum of $U_D$ must be at the constraint $\alpha_1=\alpha_0$. 

\begin{align*}
U_D(\alpha_1 = \alpha_0) &= (1-\delta(1-\omega))\alpha_0-0 -c_\omega \alpha_0 \omega \\
&= \alpha_0 \left(1-\delta+\delta\omega-c_\omega\omega\right)
\end{align*}

This utility is non-negative if and only if the condition is met: $(1-\delta+\delta\omega - c_\omega\omega)\geq 0 \iff \omega \leq \frac{1-\delta}{c_\omega - \delta}$. $D$ will prefer to abstain as long as the condition is not met, and the complete strategy is given by:

$$\text{D's strategy: }\left\{ \begin{array}{cc}
      \alpha_1^* = \alpha_0 + \frac{\omega (1 - \delta (1-\omega) - c_\omega \omega) }{2} & \text{if } \omega \leq \frac{1-\delta}{c_\omega - \delta}, \\
    \text{abstain} & \text{else.}
\end{array} \right.$$

When $D$ does not abstain, the performance improvement is always non-negative (i.e., no player will strategically invest in degrading performance) because $\frac{\omega(1-\delta (1-\omega)- c_\omega \omega)}{2} \geq 0 \iff \omega \leq \frac{1-\delta}{c_\omega - \delta}$.  
\end{proof}

\begin{proof}[Proof of Proposition \ref{prop:G_best_response}.]
\label{proof:G_best_response}
Solve for $G$'s best response using ${\alpha_1}^*$. $G$'s optimization problem becomes:
\begin{align*}
{\omega}^* &= \underset{ \omega}{\mathrm{argmax}}\, U_G (\alpha_0, \omega) \\
&\Rightarrow \frac{\partial U_G}{\partial \omega}\Bigr|_{\omega= {\omega}^*}= 0 \\
&\Rightarrow \frac{\partial}{\partial \omega} \Bigl( (\epsilon \omega + \delta(1-\omega)) r(\alpha_1) - ( \alpha_0 \omega + c_\omega \alpha_1 (1-\omega) + \cancel{p  \mathbbm{1}[\omega < \theta]} )  \Bigr) \Bigr|_{\omega= {\omega}^*}= 0  \\
&\Rightarrow \frac{\partial}{\partial \omega} \Bigl( (\epsilon \omega + \delta(1-\omega)) \alpha_1 - \alpha_0 \omega - c_\omega \alpha_1 (1-\omega)   \Bigr) \Bigr|_{\omega= {\omega}^*}= 0  \\
&\Rightarrow \frac{\partial}{\partial \omega} \Bigl( (\epsilon \omega + \delta(1-\omega)) \Bigl( \frac{\omega(1-\delta (1-\omega) - c_\omega \omega)}{2} \Bigr) + \alpha_0) - \alpha_0 \omega - \\
& \quad \quad (c_\omega  - c_\omega \omega)  \Bigl( \frac{\omega(1-\delta (1-\omega)- c_\omega \omega)}{2} \Bigr)  + \alpha_0 \Bigr) \Bigr|_{\omega= {\omega}^*}= 0 
 \end{align*}

\begin{align*}
&\Rightarrow  (\epsilon - \delta) (\frac{\omega^* - \delta \omega^* - (\delta + c_\omega ){\omega^*}^2 }{2} + \alpha_0) + ((\epsilon  - \delta )\omega^* + \delta ) (\frac{1-\delta}{2} - (\delta + c_\omega) \omega^* ) - \alpha_0 + \\
&\quad \quad  c_\omega (\frac{\omega^* - \delta \omega^* - \delta {\omega^*}^2 - c_\omega {\omega^*}^2}{2} + \alpha_0) - (c_\omega - c_\omega \omega^*) (\frac{1-\delta}{2}  - (\delta + c_\omega ) \omega^*) = 0 \\
&\Rightarrow (\epsilon - \delta) (\frac{\omega^* - \delta \omega^* - (\delta + c_\omega) \bm{{\omega^*}^2}}{2}  + \alpha_0)+ \frac{(\epsilon - \delta)(1-\delta) \omega^*}{2} - (\epsilon - \delta) (\delta + c_\omega) \bm{{\omega^*}^2} +\frac{\delta (1-\delta)}{2} -  \\
&\quad \quad \delta(\delta + c_\omega) \omega^* - \alpha_0 + c_\omega (\frac{\omega^* - \delta \omega^* - (\delta + c_\omega) \bm{{\omega^*}^2} }{2}+ \alpha_0) -  \frac{(c_\omega - c_\omega \omega^*)(1-\delta)}{2} + \\
&\quad \quad c_\omega (\delta + c_\omega) \omega^* - c_\omega  (\delta + c_\omega) \bm{{\omega^*}^2} = 0 \\
&\Rightarrow \Bigl(\frac{-(\epsilon-\delta)(\delta + c_\omega)}{2} - (\epsilon - \delta)(\delta + c_\omega) - \frac{c_\omega (\delta + c_\omega)}{2} - c_\omega (\delta + c_\omega )\Bigr) {\omega^*}^2 + (\epsilon - \delta) (\frac{(1-\delta) \bm{\omega^*}}{2}+\\
&\quad \quad  \alpha_0) + \frac{(\epsilon - \delta)(1-\delta) \bm{\omega^*}}{2} + \frac{\delta(1-\delta)}{2} - \delta (\delta + c_\omega) \bm{\omega^*} - \alpha_0 + c_\omega (\frac{(1-\delta) \bm{\omega^*}}{2} +\alpha_0) - \frac{c_\omega(1-\delta)}{2} +\\
&\quad \quad  \frac{ c_\omega \bm{\omega^*}(1-\delta)}{2}+ c_\omega (\delta + c_\omega) \bm{\omega^*} = 0 \\
&\Rightarrow \Bigl( - \frac{3}{2}  (\epsilon - \delta + c_\omega) (\delta + c_\omega) \Bigr) {\omega^*}^2 + \Bigl((\epsilon - \delta + c_\omega) (1-\delta)  + (c_\omega - \delta)(\delta + c_\omega) \Bigr) \omega^* + \\
&\quad \quad (\epsilon - \delta - 1 + c_\omega) \alpha_0 + \frac{(\delta-c_\omega)(1-\delta)}{2}.
\end{align*}

Solving for the roots with coefficients $A = -\frac{3}{2}(\epsilon - \delta + c_\omega)(\delta + c_\omega)$, $B= (\epsilon - \delta + c_\omega)(1-\delta) + (c_\omega - \delta) (\delta + c_\omega)$, and $C=(\epsilon - \delta - 1 + c_\omega) \alpha_0 + \frac{(\delta - c_\omega)(1-\delta)}{2}$ gives candidate values of $\omega^*$. Since $U_G$ is differentiable between the intervals $[0, \theta]$ and $[\theta, 1]$ when $p > 0$, each of these critical points must be evaluated individually. For some sufficiently small positive value $\mu>0$, $G$ will prefer $\omega=\theta$ to $\omega=\theta-\mu$. It suffices to show $\lim_{\mu\searrow 0} U_G(\omega=\theta-\mu) < U_G(\omega=\theta)$. Since $U_G = (\epsilon \omega + \delta (1-\omega))\alpha_1 - {\alpha_0}^2 - \alpha_0 \omega - c_\omega \alpha_1 (1-\omega) - p\mathbbm{1}[\omega<\theta]$: 

\begin{align*}
\lim_{\mu\searrow 0} U_G(\omega=\theta-\mu) &= [\textit{const}] - p\\
\lim_{\mu\searrow 0} U_G(\omega=\theta) &= [\textit{const}]
\end{align*}

Thus, we do not have to check the upper boundary of the closed interval in the optimization over $\omega$.
\end{proof}

\section{Robustness Checks}\label{sec:robustness_checks}

\subsection{Fixed Initial Performance with Specialist-Side Penalty}
The main results in Section \ref{sec:game-fixed} rely on cost functions that assign the entire regulatory burden to the generalist.  However, some regulatory frameworks may also impose disclosure and safety testing requirements on the specialist for fine-tuning a closed model. For instance, EU AI Act Article 25 subjects entities making a ``substantial modification'' to a high-risk AI system to quality management and documentation requirements. To determine whether the policy recommendations remain valid under this alternative liability structure, we test the scenario where both generalists and specialists face the same regulatory cost $p$. The specialist's cost function with a penalty is $\phi_{D}  := (\alpha_1 - \alpha_0)^2  \omega^{-1} + c_\omega \alpha_1 \omega + p \mathbbm{1} [\omega < \theta]$.

\begin{proposition}
 In a game with quadratic costs, a monotonic revenue function $r(\alpha_1)$, and a specialist-side penalty $p$, if $\omega \leq \frac{1}{\delta - c_\omega} (\frac{p}{\alpha_0} + \delta - 1)$, then $D$'s best-response strategy is given by $\alpha_1^* = \alpha_0 + \frac{\omega (1 - \delta (1-\omega) - c_\omega \omega) }{2}$. If $\omega > \frac{1}{\delta - c_\omega} (\frac{p}{\alpha_0} + \delta - 1)$, $D$ will choose to abstain.
\end{proposition}

\begin{proof}
$D$'s utility function differs from the one in \ref{subsec:spe_fixed_perf} only by the constant penalty term $- p \mathbbm{1}[\omega < \theta]$. Since this constant disappears when deriving with respect to $\alpha_1$, the strict concavity of $U_D$ and the derivation of $D$' best response are identical to Proposition \ref{prop:D_best_response}. Specifically, the first-order condition yields ${\alpha_1}^* = \alpha_0 + \frac{\omega(1-\delta(1-\omega) - c_\omega \omega)}{2}$. The main difference is the feasibility condition, as the utility at the constraint instead becomes:
\begin{align*}
U_D(\alpha_1 = \alpha_0) &= (1-\delta(1-\omega))\alpha_0-0 -c_\omega \alpha_0 \omega - p \\
&= \alpha_0 \left(1-\delta+\delta\omega-c_\omega\omega\right) - p. 
\end{align*}

When the condition $\omega \leq \frac{1}{\delta - c_\omega}(\frac{p}{\alpha_0} + \delta - 1)$ is satisfied, the critical point is feasible and optimal. Since $U_D$ is strictly concave, if the critical point is negative, the numerical optimum of $U_D$ must be at the constraint $\alpha_1=\alpha_0$.  This utility is non-negative if and only if the condition is met: $\alpha_0 (1-\delta+\delta\omega - c_\omega\omega) - p \geq 0 \iff \omega \leq \frac{1}{\delta - c_\omega}(\frac{p}{\alpha_0} + \delta - 1)$. Therefore, $D$'s complete strategy is given by:

$$\text{D's strategy: }\left\{ \begin{array}{cc}
      \alpha_1^* = \alpha_0 + \frac{\omega (1 - \delta (1-\omega) - c_\omega \omega) }{2} & \text{if } \omega \leq \frac{1}{\delta - c_\omega}(\frac{p}{\alpha_0} + \delta - 1), \\
    \text{abstain} & \text{else.}
\end{array} \right.$$
\end{proof}

$G$'s best response with a specialist-side penalty is the same as in Proposition \ref{prop:G_best_response}, since introducing the specialist penalty only affects when $D$ abstains, not the solution for ${\alpha_1}^*$. We replicate the equilibrium outcomes using the same game parameters as Figures \ref{fig:regulation_benefits} and \ref{fig:regulation_mixed_effects} with the specialist-side penalty for comparison. Figure \ref{fig:regulation_benefits_specialist_p} (corresponding to  Figure \ref{fig:regulation_benefits}) and \ref{fig:regulation_mixed_effects_specialist_p} (corresponding to Figure \ref{fig:regulation_mixed_effects}) show the following differences:
\begin{enumerate}
    \item $D$ abstains for a broader range of $(\theta, p)$ combinations above the indifference curve. Specifically, the maximum penalty where $D$ stops participating participating is lower when they adapt a closed model that fails to meet the threshold.
    \item The additional penalty term slightly reduces $D$'s utility payoff for choosing ${\alpha_1}^*$, but the reduction is negligible. 
\end{enumerate}

The specialist-side penalty results in the same qualitative outcomes outside of abstain regions. Below the indifference curve, lower-performing models generate Pareto improvements, while higher-performing models produce utility transfers under greater openness.

\begin{figure}[!ht]
\centering
\includegraphics[width=\textwidth]{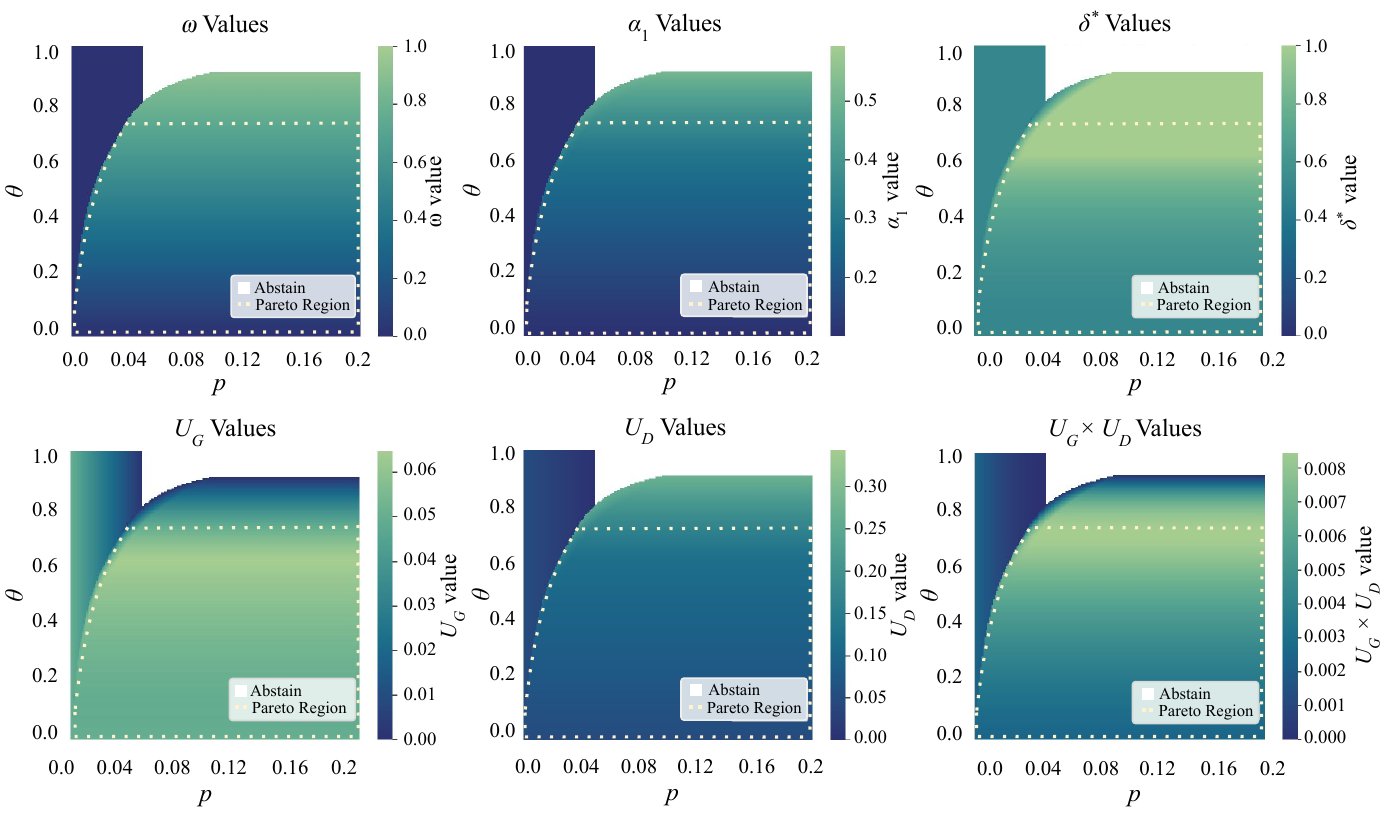}
\caption{\textbf{(Specialist Penalty)} Under a specialist penalty, equilibrium outcomes for $\alpha_0 = 0.1, c_\omega = 0.01$, $\epsilon = 0.1$ with Nash bargaining show Pareto improvement over utilities $(\omega, \alpha_1, U_G, U_D)$ for any $(p, \theta)$ regulation in the dotted region.}
\label{fig:regulation_benefits_specialist_p}
\end{figure}

\begin{figure}[!ht]
\includegraphics[width=\textwidth]{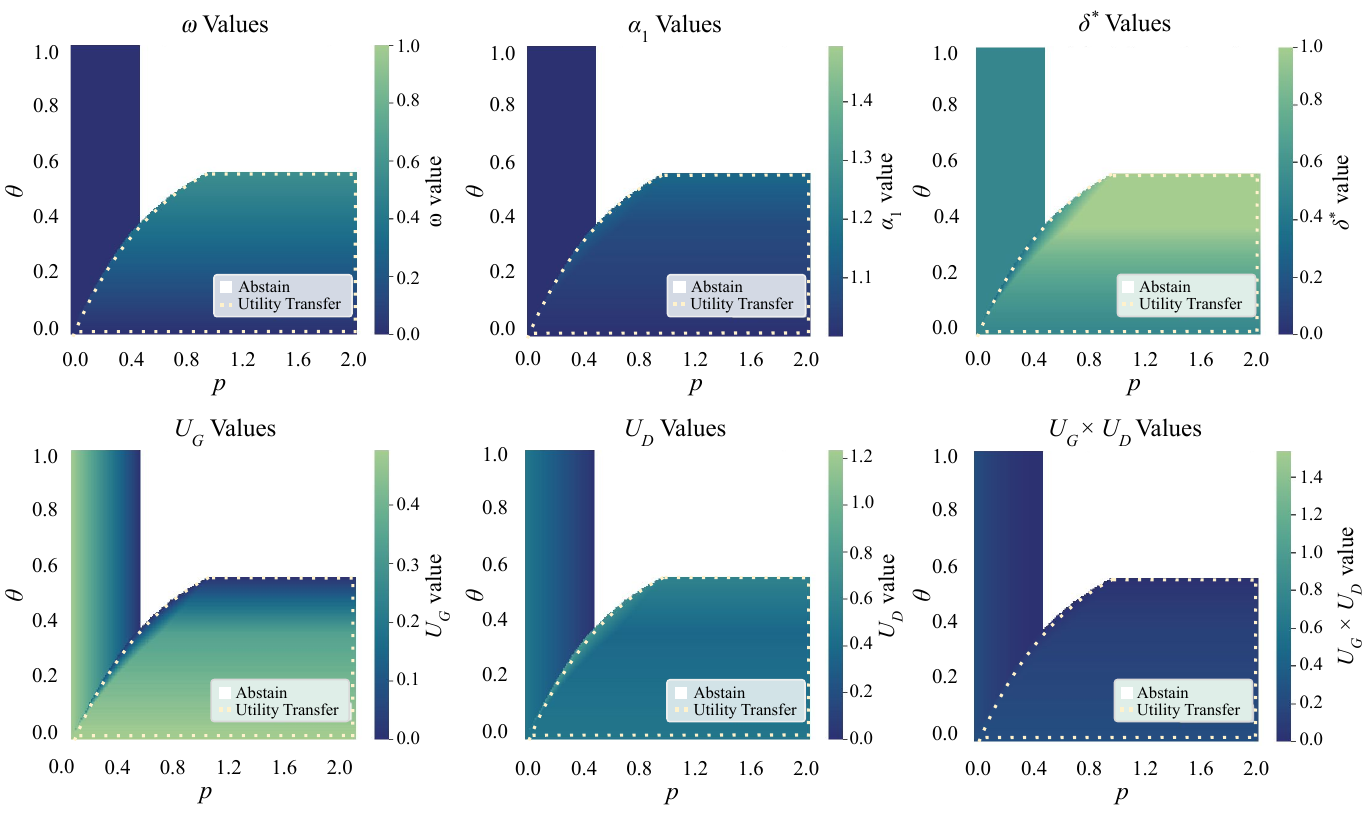}
\caption{\textbf{(Specialist Penalty)}  Under a specialist penalty, equilibrium outcomes  for $\alpha_0 = 1, c_\omega = 0.01$, $\epsilon = 0.1$ with Nash bargaining create a region  where raising the open-source threshold improves $\omega$, $\alpha_1$, and $U_D$ but lowers $U_G$. This utility transfer  encourages specialist innovation, measured by $\alpha_1$.} 
\label{fig:regulation_mixed_effects_specialist_p}
\end{figure}

\newpage

\subsection{Log Scaling for Production Costs}

 As an additional robustness check, we consider alternative forms of the production costs associated with $G$ releasing the model publicly. Although quadratic revenue remains the dominant term, we explore sublinear scaling for $G$'s production costs. Fixed costs in internal documentation and release processes suggest that logarithmic scaling may offer a more precise estimate of production costs than linear scaling. With logarithmic production costs ($\alpha_0 \log \omega$), $G$'s optimization problem becomes:

\begin{align*}
&\Rightarrow  (\epsilon - \delta) (\frac{\omega^* - \delta \omega^* - (\delta + c_\omega ){\omega^*}^2 }{2} + \alpha_0) + ((\epsilon  - \delta )\omega^* + \delta ) (\frac{1-\delta}{2} - (\delta + c_\omega) \omega^* ) - \bm{\alpha_0 \omega^{-1}} + \\
&\quad \quad  c_\omega (\frac{\omega^* - \delta \omega^* - \delta {\omega^*}^2 - c_\omega {\omega^*}^2}{2} + \alpha_0) - (c_\omega - c_\omega \omega^*) (\frac{1-\delta}{2}  - (\delta + c_\omega ) \omega^*) = 0\\
&\Rightarrow (\epsilon - \delta) (\frac{\omega^* - \delta \omega^* - (\delta + c_\omega) \bm{{\omega^*}^2}}{2}  + \alpha_0)+ \frac{(\epsilon - \delta)(1-\delta) \omega^*}{2} - (\epsilon - \delta) (\delta + c_\omega) \bm{{\omega^*}^2} +\frac{\delta (1-\delta)}{2} -  \\
&\quad \quad \delta(\delta + c_\omega) \omega^* - \bm{\alpha_0 \omega^{-1}}  + c_\omega (\frac{\omega^* - \delta \omega^* - (\delta + c_\omega) \bm{{\omega^*}^2} }{2}+ \alpha_0) -  \frac{(c_\omega - c_\omega \omega^*)(1-\delta)}{2} + \\
&\quad \quad c_\omega (\delta + c_\omega) \omega^* - c_\omega  (\delta + c_\omega) \bm{{\omega^*}^2} = 0\\
&\Rightarrow \Bigl(\frac{-(\epsilon-\delta)(\delta + c_\omega)}{2} - (\epsilon - \delta)(\delta + c_\omega) - \frac{c_\omega (\delta + c_\omega)}{2} - c_\omega (\delta + c_\omega )\Bigr) {\omega^*}^2 + (\epsilon - \delta) (\frac{(1-\delta) \bm{\omega^*}}{2}+\\
&\quad \quad  \alpha_0) + \frac{(\epsilon - \delta)(1-\delta) \bm{\omega^*}}{2} + \frac{\delta(1-\delta)}{2} - \delta (\delta + c_\omega) \bm{\omega^*} - \bm{\alpha_0 \omega^{-1}}  + c_\omega (\frac{(1-\delta) \bm{\omega^*}}{2} +\alpha_0) - \frac{c_\omega(1-\delta)}{2} +\\
&\quad \quad  \frac{ c_\omega \bm{\omega^*}(1-\delta)}{2}+ c_\omega (\delta + c_\omega) \bm{\omega^*} = 0 \\
&\Rightarrow \Bigl( - \frac{3}{2}  (\epsilon - \delta + c_\omega) (\delta + c_\omega) \Bigr) {\omega^*}^2 + \Bigl((\epsilon - \delta + c_\omega) (1-\delta)  + (c_\omega - \delta)(\delta + c_\omega) \Bigr) \omega^* + \\
&\quad \quad (\epsilon - \delta - 1 + c_\omega) \alpha_0 + \frac{(\delta-c_\omega)(1-\delta)}{2} + \bm{\alpha_0 \omega^{-1}} \\
&\Rightarrow \Bigl( - \frac{3}{2}  (\epsilon - \delta + c_\omega) (\delta + c_\omega) \Bigr) {\omega^*}^3 + \Bigl((\epsilon - \delta + c_\omega) (1-\delta)  + (c_\omega - \delta)(\delta + c_\omega) \Bigr) {\omega^*}^2 + \\
&\quad \quad  \Bigl( (\epsilon - \delta + c_\omega) \alpha_0 + \frac{(\delta-c_\omega)(1-\delta)}{2} \Bigr) \omega^* - \bm{\alpha_0 }.
\end{align*}

Solving the cubic with the following coefficients and checking boundary points gives candidate values of $\omega^*$:
\begin{itemize}
    \item $A = - \frac{3}{2} (\epsilon - \delta + c_\omega)(\delta + c_\omega)$,
    \item $B = (\epsilon - \delta + c_\omega)(1-\delta) + (c_\omega - \delta)(\delta + c_\omega)$, 
    \item $C = (\epsilon - \delta + c_\omega) \alpha_0 + \frac{(\delta - c_\omega)(1 - \delta)}{2})$, 
    \item $D = -\alpha_0$.
\end{itemize}

With linear scaling for production costs, $G$'s best response is obtained from the set of candidate values given by $A \omega^2 + B \omega + C$ (Appendix \ref{sec:solution_derivations}), where $A = -\frac{3}{2}(\epsilon - \delta + c_\omega)(\delta + c_\omega)$, $B= (\epsilon - \delta + c_\omega)(1-\delta) + (c_\omega - \delta) (\delta + c_\omega)$, and $C=(\epsilon - \delta - 1 + c_\omega) \alpha_0 + \frac{(\delta - c_\omega)(1-\delta)}{2}$.

Log scaling for production costs produces qualitatively similar outcomes to the main results, as the quadratic or cubic terms from revenue dominate the production cost in either case. We reproduce equilibrium outcomes using the same parameters as Figures \ref{fig:regulation_benefits}  and \ref{fig:regulation_mixed_effects} in Figures \ref{fig:regulation_benefits_log_scaling}  and \ref{fig:regulation_mixed_effects_log_scaling}. For low performance ($\alpha_0 = 0.1$ in Figure \ref{fig:regulation_benefits}),  $U_G$ increases more due to reduced release costs and the model being significantly more open, but all the core qualitative insights of the analysis are preserved.

\begin{figure}[!ht]
\includegraphics[width=\textwidth]{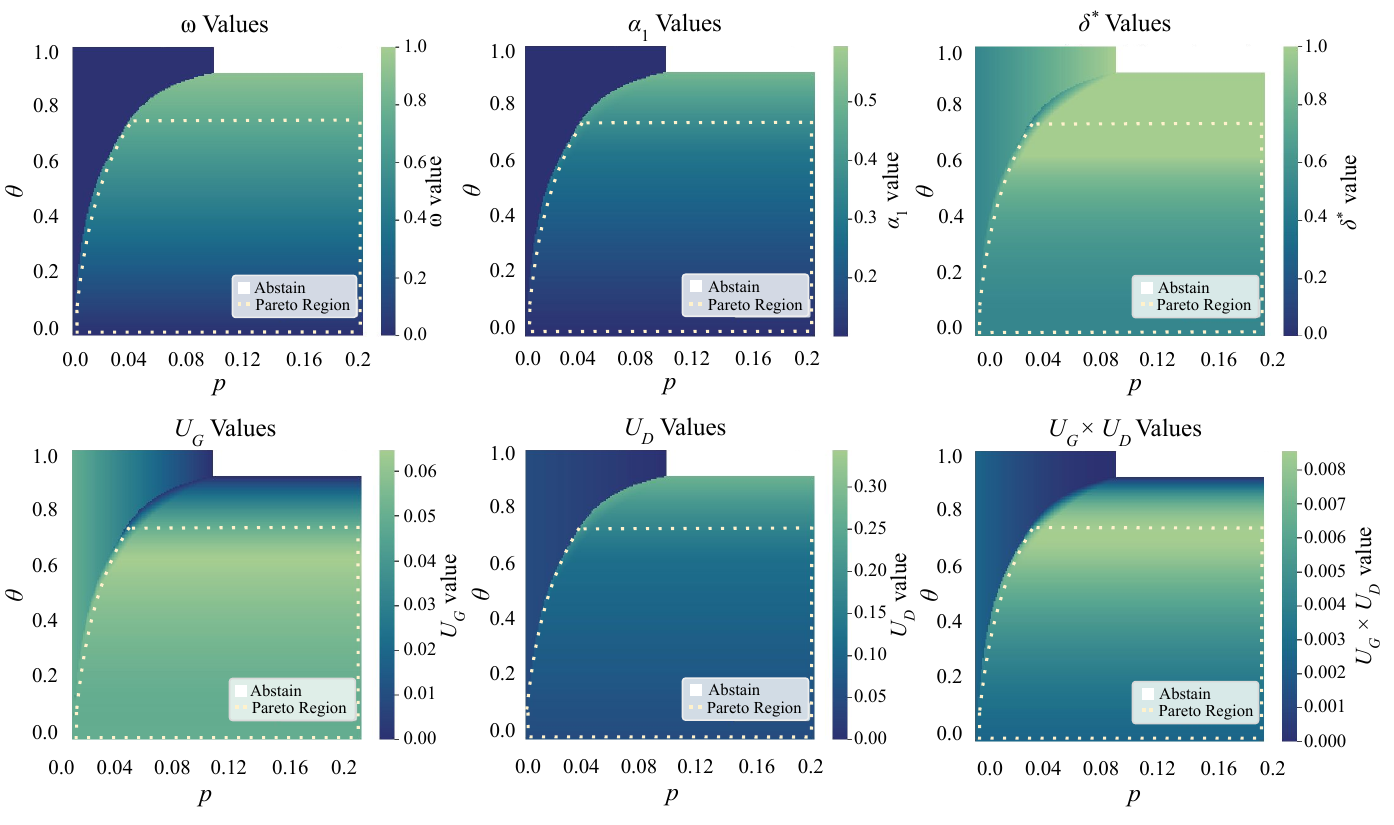}
\caption{\textbf{(Log Scaling)} Under log scaling for production costs, equilibrium outcomes for low performance ($\alpha_0 = 0.1, c_\omega = 0.01$, $\epsilon = 0.1$) with Nash bargaining still create a region where raising the open-source threshold improves $\omega$, $\alpha_1$, and $U_D$ but lowers $U_G$.} 
\label{fig:regulation_benefits_log_scaling}
\end{figure}

\begin{figure}[!ht]
\centering
\includegraphics[width=\textwidth]{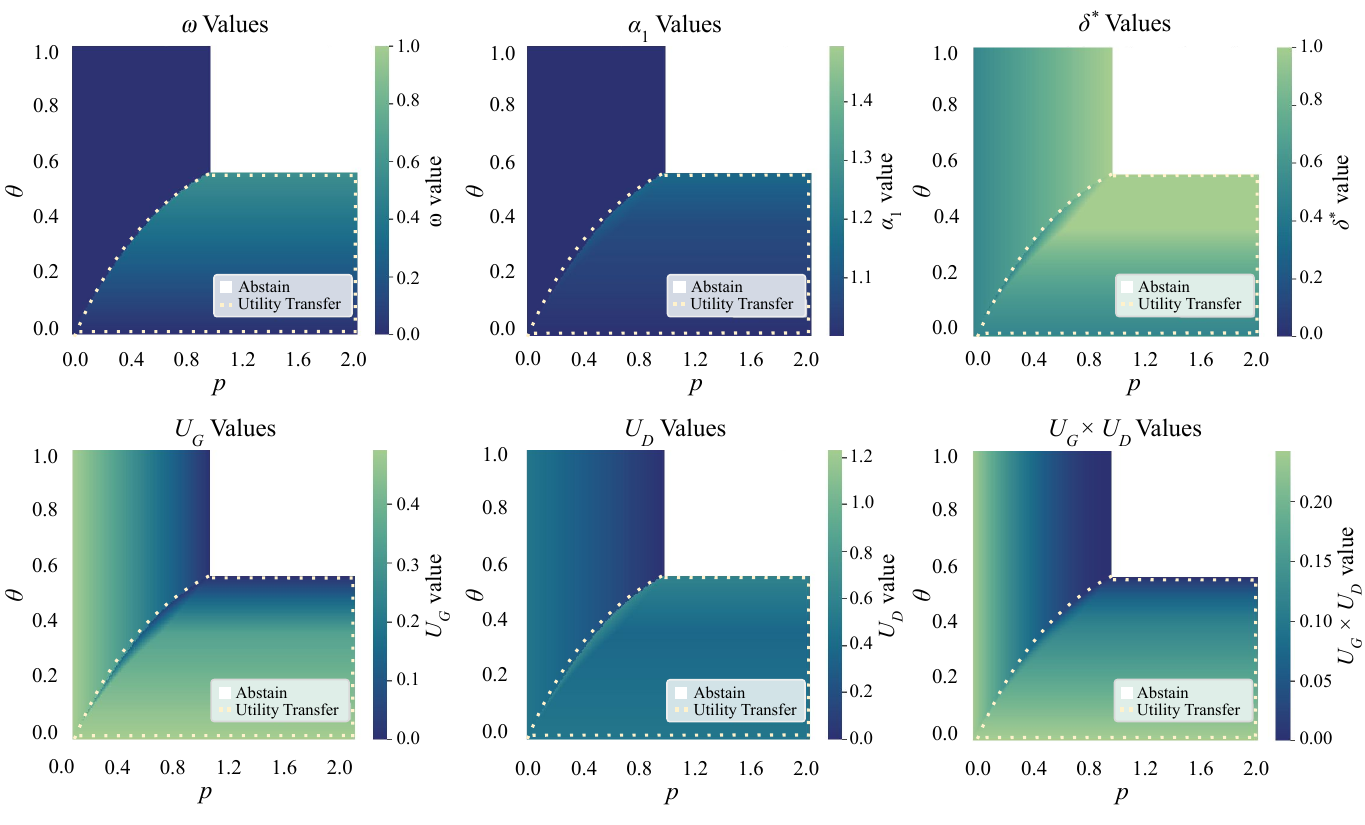}
\caption{\textbf{(Log Scaling)}  Under log scaling for production costs, equilibrium outcomes for high performance ($\alpha_0 = 1, c_\omega = 0.01$, $\epsilon = 0.1$) with Nash bargaining show Pareto improvement over utilities $(\omega, \alpha_1, U_G, U_D)$ for any $(p, \theta)$ regulation in the dotted region.}
\label{fig:regulation_mixed_effects_log_scaling}
\end{figure}

\newpage

\subsection{Alternate Division of Operation Costs}

 At either extreme of openness ($\omega = 0$ or $\omega = 1$), either the specialist or generalist will assume the majority of operation costs. For the main results in Section \ref{sec:game-fixed},  $G$'s operation cost is $ c_\omega \alpha_1 (1-\omega)$ in order to ensure that operation costs sum to a constant between the generalist and specialist: $\phi_{G_{\text{op.}}} + \phi_{D_\text{op.}} =   c_\omega \alpha_1 (1-\omega) + c_\omega \alpha_1 \omega = c_\omega \alpha_1$. Operation costs between $G$ and $D$ sum to a constant value because at least one player must pay for hosting and inference in order to generate revenue through model usage. This implies that total operation costs depend on the final performance achieved after fine-tuning, rather than baseline performance, and the generalist must bear some of the operation costs associated with improved performance. As a robustness check, we examine the effect of changing $G$'s operation cost from $c_\omega \alpha_1(1-\omega)$ to $c_\omega \alpha_0 (1 - \omega)$, meaning the generalist does not bear any additional operation costs resulting from downstream model improvements.  The coefficients in $G$'s best response change from:

\begin{itemize}
    \item $A = - \frac{3}{2} (\epsilon - \delta + c_\omega) (\delta + c_\omega)$ to $A = - \frac{(\delta + c_\omega)(\epsilon - \delta)}{2}$,
    \item $B = (\epsilon - \delta + c_\omega)(1-\delta) + (c_\omega - \delta)(\delta + c_\omega)$ to $B = \frac{(\epsilon - \delta)(1 - \delta)}{2}$,
    \item $C = \alpha_0 (\epsilon - \delta - 1 + c_\omega) + \frac{(\epsilon - c_\omega)(1-\delta)}{2}$ to $C = \alpha_0 (\epsilon - \delta - 1 + c_\omega)$.
\end{itemize}

While $U_G$ increases slightly, there are no qualitative differences in the equilibrium outcomes with $c_\omega \in \{0.01, 0.1, 0.5, 1\}$ between the version with $G$'s operation costs as $c_\omega \alpha_0 (1-\omega)$ vs.  $c_\omega \alpha_1 (1-\omega)$.  We add plots for the replication of equilibrium outcomes with the same parameters as Figures \ref{fig:regulation_benefits} and \ref{fig:regulation_mixed_effects} with $\alpha_0$ in Figures \ref{fig:regulation_benefits_op_costs} and \ref{fig:regulation_mixed_effects_op_costs}. 

These two setups are qualitatively indistinguishable  because for models with downloadable weights, $D$ will incur most of the operation costs for fine-tuning when $\omega$ is high enough. If $\omega$ is low and the model is mostly closed, the operation costs are divided in one of two ways. First, the closed model provider offers fine-tuning services, but this fine-tuning occurs on the generalist’s infrastructure (e.g., OpenAI's fine-tuning APIs). Second, $D$ covers operation costs related to fine-tuning but must choose $\alpha_1 = \alpha_0$ because production costs ($(\alpha_1 - \alpha_0)^2 \omega^{-1}$) become prohibitive if $D$ tries to improve the model when openness is extremely low. In this case, the distinction between $\alpha_1$ vs. $\alpha_0$ in $G$’s operation costs becomes irrelevant because $\alpha_1 = \alpha_0$.

\begin{figure}[!ht]
\centering
\includegraphics[width=\textwidth]{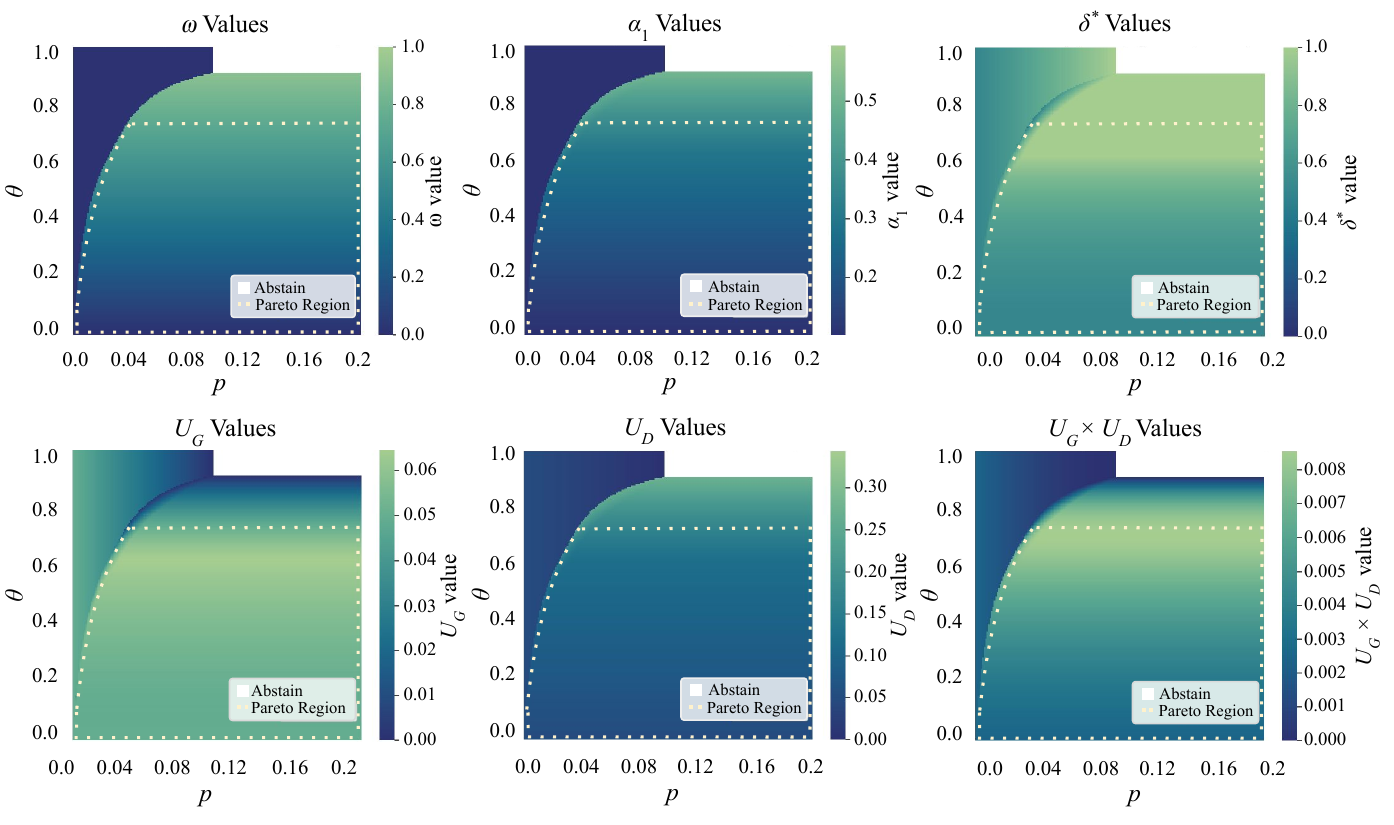}
\caption{\textbf{(Alternate Operation Costs)} When $G$'s operation costs are a function of $\alpha_0$ instead of final performance, equilibrium outcomes for low performance ($\alpha_0 = 0.1, c_\omega = 0.01$, $\epsilon = 0.1$) with Nash bargaining still create a region where raising the open-source threshold improves $\omega$, $\alpha_1$, and $U_D$ but lowers $U_G$.}
\label{fig:regulation_benefits_op_costs}
\end{figure}

\begin{figure}[!ht]
\centering
\includegraphics[width=\textwidth]{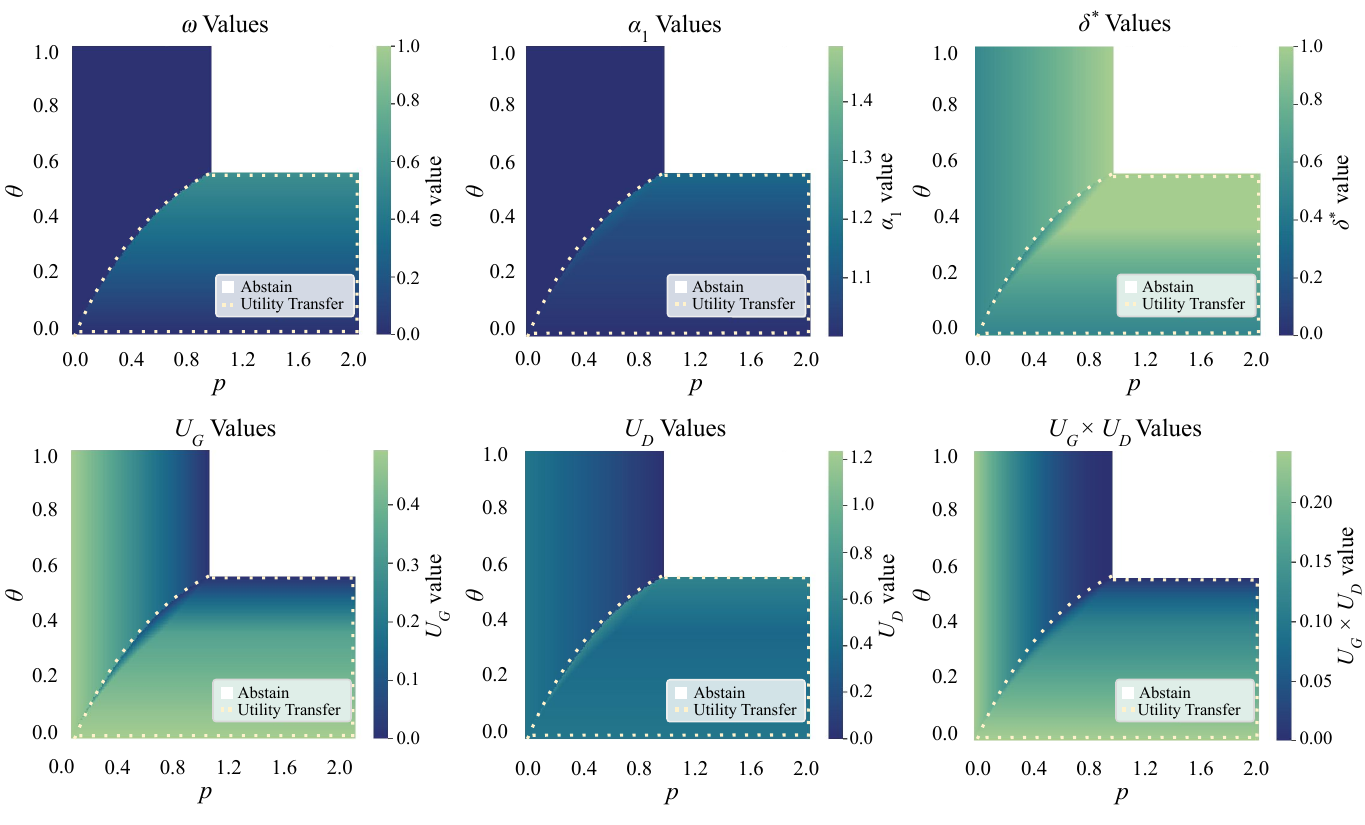}
\caption{\textbf{(Alternate Operation Costs)}  When $G$'s operation costs are a function of $\alpha_0$ instead of final performance, equilibrium outcomes for high performance ($\alpha_0 = 1, c_\omega = 0.01$, $\epsilon = 0.1$) with Nash bargaining show Pareto improvement over utilities $(\omega, \alpha_1, U_G, U_D)$ for any $(p, \theta)$ regulation in the dotted region.}
\label{fig:regulation_mixed_effects_op_costs}
\end{figure}

The results hold regardless of which player bears the additional operation costs associated with the fine-tuned model's improved performance. In practice, generalists can batch operation costs and take advantage of cost amortization across larger deployment volumes \cite{cottier2025}, but because our interest is in cost parameters the regulator can directly control, we defer the exploration of hardware and scale efficiencies to future research.

\section{Model Extensions}\label{sec:extensions}

\subsection{Flexible Initial Performance}

$G$'s costs of developing the model to base performance $\alpha_0$ are dropped from $\phi_G$ in our game, as we assume $G$ has already invested in achieving a fixed performance level prior to the regulation and cannot adaptively recover sunk development costs. ``Development'' is a general label for all fixed costs related to pre-training (data licensing, labor, hardware, R\&D) and post-training (alignment, evaluation). This setup resembles the decisions of model developers who have reached a performance ceiling and cannot overcome resource limitations in the short term.

Although the main results assume that $G$'s choice of $\alpha_0$ occurs prior to the introduction of the regulation, an alternative setup may allow $G$ to simultaneously choose $\omega$ and $\alpha_0$ rather than selecting an openness level given a fixed performance level. This  setup represents scenarios where new entrants jointly choose performance and openness levels for a model or when incumbents decide to release several variants of a pre-existing model. 

\begin{proposition}
When $|\epsilon - 1 - \delta| \geq c_\omega$, the best response $({\alpha_0}^*, \omega^*)$ obtained by calculating $\omega^*$ and finding $\alpha_0^*$ numerically using  $\omega^*$  achieves a global maximum for $U_G$. 
\end{proposition}

\begin{proof}
 $G$'s strategy is two-dimensional: 
\begin{align*}
\begin{bmatrix}
{\alpha_0}^* \\
\omega^*
\end{bmatrix} &= \underset{\alpha_0, \omega}{\mathrm{argmax}} \quad U_G
\end{align*}
From Proposition \ref{prop:D_best_response}, $U_G$ can be written in terms of $\omega$ and $\alpha_0$.
\begin{align*}
U_G &= (\epsilon \omega + \delta(1-\omega)) \Bigl( \alpha_0 +  \frac{\omega(1-\omega(1-\omega) - c_\omega \omega)}{2}\Bigr) - \\
& \quad \quad \alpha_0 \omega +  c_\omega \Bigl( \alpha_0 +  \frac{\omega(1-\omega(1-\omega) - c_\omega \omega)}{2}\Bigr) (1-\omega) - p \mathbbm{1}[\omega < \theta]. 
\end{align*}
\begin{align*}
\end{align*}

Differentiating $U_G$ with respect to $G$'s decision variables, we get
\begin{align*}
\frac{\partial U_G}{\partial \alpha_0} &= \epsilon \omega + \delta(1-\omega) - \omega + c_\omega(1-\omega), \\
\frac{\partial U_G}{\partial \omega} &=  (\epsilon - \delta) (\frac{\omega - \delta \omega - (\delta + c_\omega ){\omega}^2 }{2} + \alpha_0) + ((\epsilon  - \delta )\omega + \delta ) (\frac{1-\delta}{2} - (\delta + c_\omega) \omega) - \alpha_0 + \\
&\quad \quad  c_\omega (\frac{\omega - \delta \omega - \delta {\omega}^2 - c_\omega {\omega}^2}{2} + \alpha_0) - (c_\omega - c_\omega \omega) (\frac{1-\delta}{2}  - (\delta + c_\omega ) \omega)\\
\frac{\partial U_G}{\partial \alpha_0 \omega} &= \epsilon - \delta - 1 - c_\omega, \\
\frac{\partial U_G}{\partial \omega \alpha_0} &= \epsilon - \delta - 1 + c_\omega, \\
\frac{\partial^2 U_G}{\partial \alpha_0^2} &= 0.
\end{align*}


From the derivation in the proof of Proposition \ref{prop:G_best_response}, we know that $\frac{\partial U_G}{\partial \omega} = A \omega^2 + B \omega + C$, where $A = -\frac{3}{2}(\epsilon - \delta + c_\omega)(\delta + c_\omega)$, $B= (\epsilon - \delta + c_\omega)(1-\delta) + (c_\omega - \delta) (\delta + c_\omega)$, and $C=(\epsilon - \delta - 1 + c_\omega) \alpha_0 + \frac{(\delta - c_\omega)(1-\delta)}{2}$. Therefore, 
\begin{align*}
\frac{\partial^2  U_G}{\partial \omega^2} &= 2A \omega + B \\
&= -3 (\epsilon - \delta + c_\omega)(\delta + c_\omega)  \omega + (\epsilon - \delta + c_\omega)(1-\delta) + (c_\omega - \delta)(\delta + c_\omega) \\
&= (\epsilon - \delta + c_\omega)(1-\delta-3(\delta + c_\omega) \omega)+ (c_\omega - \delta)(\delta + c_\omega).
\end{align*}

\begin{align*}
\nabla^2 U_G &=
\begin{bmatrix}
\frac{\partial^2 U_G}{\partial {\alpha_0}^2} & \frac{\partial^2 U_G}{\partial \alpha_0 \partial \omega} \\
\frac{\partial^2 U_G}{\partial \omega \partial \alpha_0} & \frac{\partial^2 U_G}{\partial \omega^2} 
\end{bmatrix} \\
&= \begin{bmatrix}
0 &  \epsilon - 1 - \delta - c_\omega \\
\epsilon - \delta - 1 + c_\omega & (\epsilon - \delta + c_\omega)(1-\delta-3(\delta + c_\omega) \omega)+ (c_\omega - \delta)(\delta + c_\omega). 
\end{bmatrix}
\end{align*}

Next, we show that the Hessian of $U_G$ is negative semidefinite when $|\epsilon - 1 - \delta| \geq c_\omega$. By inspection, the first principal minor satisfies $\frac{\partial^2 U_G}{\partial \alpha_0^2} \leq 0$. The criterion $\text{det}(\nabla^2 U_G) \geq 0$ only holds when $|\epsilon - 1 - \delta| \geq {c_\omega}$.

\begin{align*}
\text{det}(\nabla^2 U_G) &= 0 - (\epsilon - 1 - \delta - c_\omega)(\epsilon - \delta - 1 + c_\omega) \\
&= -  (\epsilon - 1 - \delta)^2 + {c_\omega}^2.
\end{align*}

Taking the square root of both sides, we get that $\text{det}(\nabla^2 U_G) \geq 0$ only when $|\epsilon - 1 - \delta| \geq c_\omega$, given that $c_\omega \geq 0$ since it is a nonnegative cost ratio. This implies that the Hessian matrix $\nabla^2 U_G$ is negative semidefinite when $|\epsilon - 1 - \delta| \geq c_\omega$, which establishes the concavity of $U_G$ w.r.t. $\alpha_0$ and $\omega$ and guarantees the joint solution achieves a global maximum.
\end{proof}

When the condition $|\epsilon - 1 - \delta| \geq c_\omega$ is not met, $G$ generates almost no positive utility from releasing the model (Figures \ref{fig:joint_opt_global_max} and \ref{fig:joint_opt_no_global_max}).

\begin{figure}[!ht]
\centering
\includegraphics[width=\textwidth]{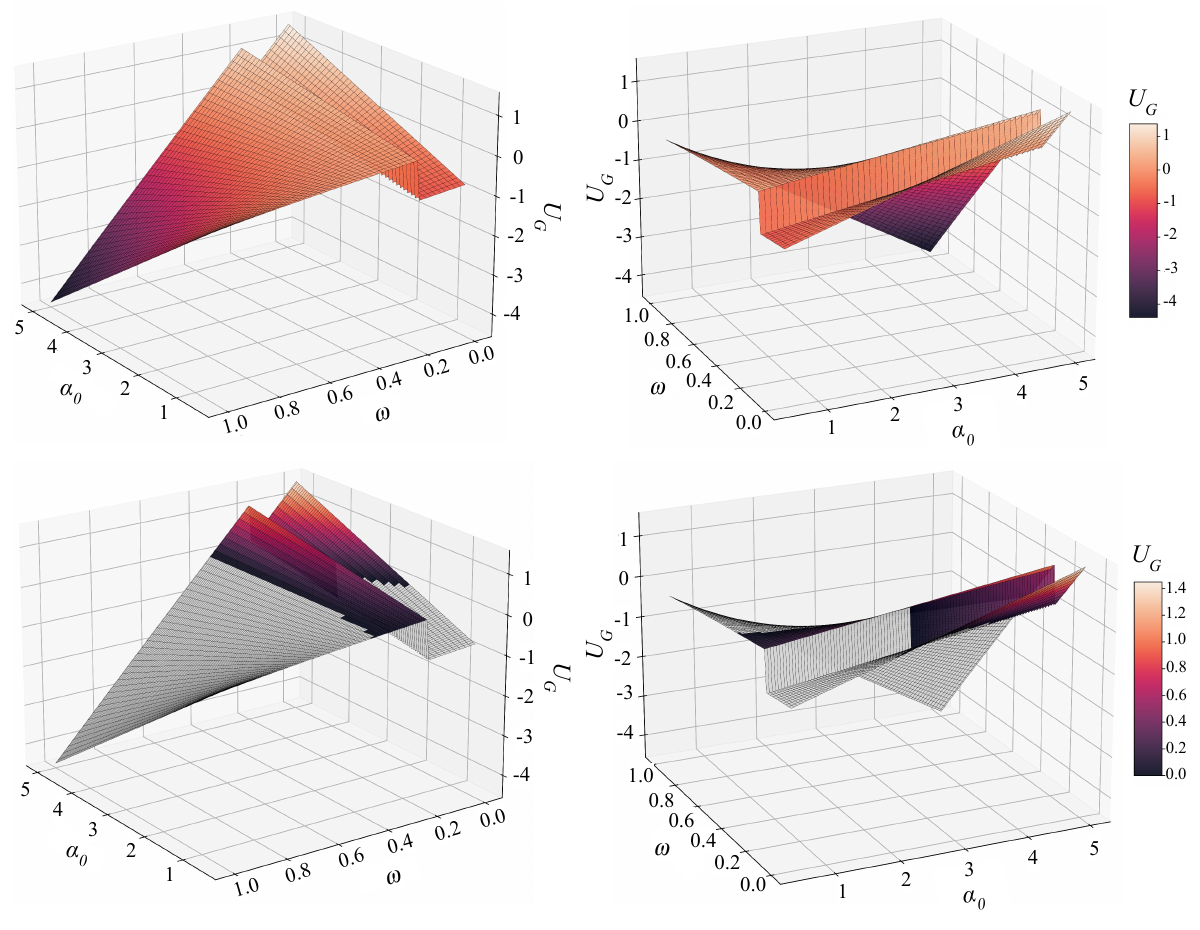}
\caption{Generalist's strategy space for $(\alpha_0, \omega)$ when $|\epsilon - 1 - \delta| \geq c_\omega$, with parameters $\epsilon = 0.1, \delta=0.5, c_\omega = 0.01, p = 1, \theta = 0.2$.  \textbf{(Top)} $G$'s unconstrained strategy space for $\texttt{azimuth} = 145$ and $\texttt{azimuth} = 245$. \textbf{(Bottom)} $G$'s strategy space when abstaining for negative utility payoffs for $\texttt{azimuth} = 145$ and $\texttt{azimuth} = 245$.}
\label{fig:joint_opt_global_max}
\end{figure}

\begin{figure}[!ht]
\centering
\includegraphics[width=\textwidth]{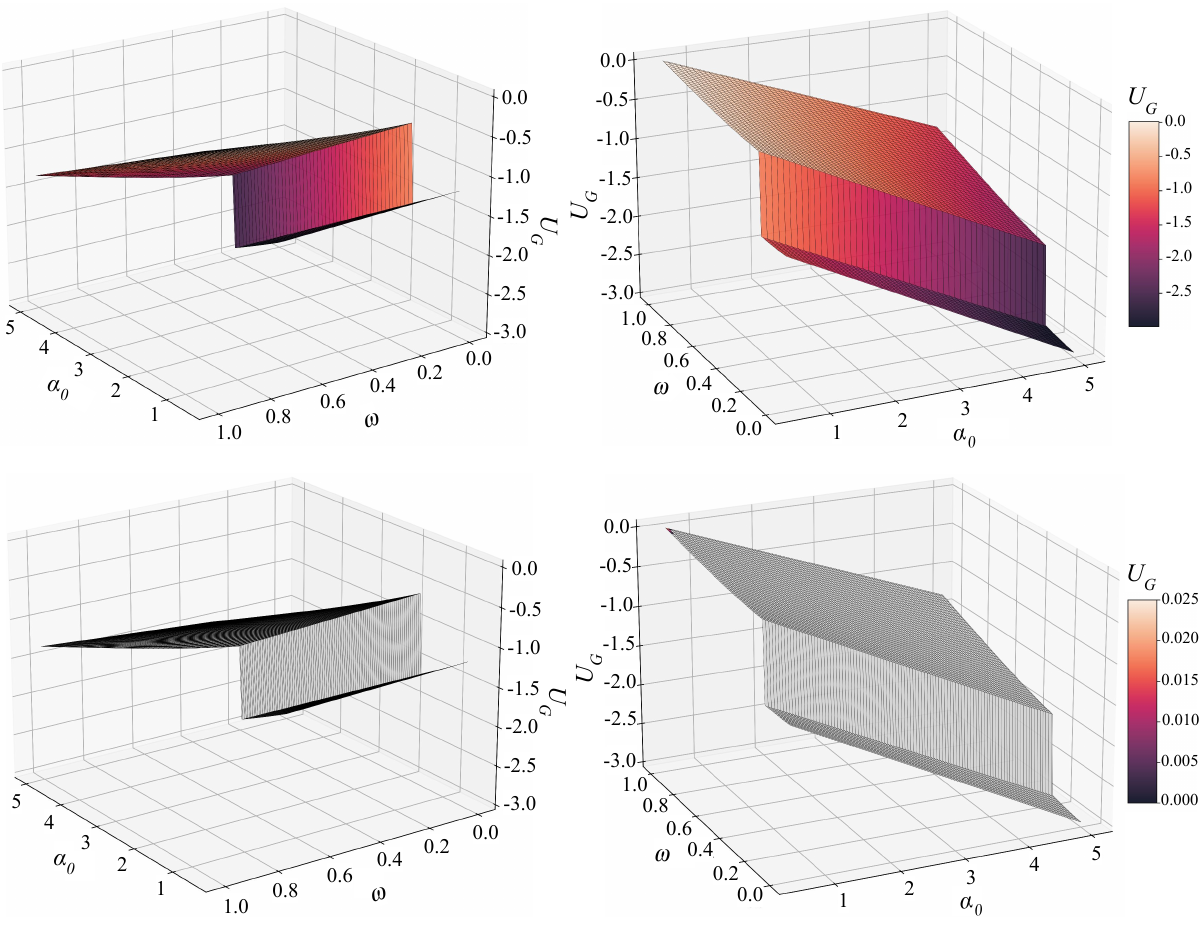}
\caption{Generalist's strategy space for $(\alpha_0, \omega)$ when $|\epsilon - 1 - \delta| < c_\omega$, with parameters $\epsilon = 0.7, \delta=0.1, c_\omega = 0.5, p = 1, \theta = 0.2$.  \textbf{(Top)} $G$'s unconstrained strategy space for $\texttt{azimuth} = 145$ and $\texttt{azimuth} = 245$. \textbf{(Bottom)} $G$'s strategy space when abstaining for negative utility payoffs for $\texttt{azimuth} = 145$ and $\texttt{azimuth} = 245$.}
\label{fig:joint_opt_no_global_max}
\end{figure}

\subsection{Multidimensional Performance Definition}

The utility functions assume performance is a one-dimensional property which permits a linear ranking of models. However, it is plausible that no single model consistently dominates across all dimensions (reasoning, general tasks, efficiency). Performance can be extended to a multidimensional variable to capture such scenarios. Formally, this requires the closed-form derivations for both players' utility functions to be reformulated to handle vector-valued performance metrics. For example, $\frac{\partial U_D}{\partial \alpha_1}$ would become an $n$-dimensional vector $\nabla_{\alpha_1} U_D$, where each partial derivative entry represents the marginal utility with respect to a different performance dimension. This setup can encode how regulatory performance preferences diverge from the utility considerations of end users, but because this complexity distracts from the core analysis on model openness, we omit such fine-grained performance definitions from the main results. 
\section{Regulation Equilibrium Results} \label{sec:regulation_formalisms}

\begin{proof}[Proof of Proposition \ref{prop:indif_curve}]

The area above the indifference curve in the space of $(p, \theta)$ choices represents when $\omega^* \rightarrow 0$, while the area below it represents when $G$ meets the open-source threshold. 
\begin{align*}
{U_G}(\omega \approx 0) &\geq {U_G}(\omega = \theta) \\
\delta \alpha_1  - c_{\omega}  \alpha_1 - p^*   &\geq (\epsilon \theta + \delta (1-\theta) ) \alpha_1 -  \alpha_0\theta - c_{\omega}  \alpha_1 (1-\theta)  \\
\delta - c_\omega - \frac{p^*}{\alpha_1} &\geq \epsilon \theta + \delta - \delta \theta - \frac{\alpha_0 \theta}{\alpha_1} - c_\omega + c_\omega \theta  \\
 - \frac{p^*}{\alpha_1} &\geq \theta (\epsilon   - \delta - \frac{\alpha_0 }{\alpha_1} + c_\omega )  \\
 p^* &\leq \theta ( \delta + \frac{\alpha_0 }{\alpha_1} - \epsilon - c_\omega ) \alpha_1  
\end{align*}
For a given threshold, as long as $p \leq p^*$, $G$ will keep the model closed rather than meeting the threshold. If $p^* < 0$, $G$ will always open the model at or above the threshold, regardless of the threshold.
\end{proof}

\begin{proposition}
There exist games with quadratic costs and a monotonic revenue function $r(\alpha_1)$ in which there are Pareto-optimal regulations that improve all utilities $(\omega, \alpha_1, U_G, U_D)$ compared to no regulation.
\end{proposition}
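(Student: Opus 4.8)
The plan is to prove this existence statement by exhibiting an explicit family of games together with a regulation that strictly improves every coordinate of $(\omega, \alpha_1, U_G, U_D)$, and then upgrading that merely-improving regulation to a Pareto-optimal one. For the witness I would fix the parameters of Figure \ref{fig:regulation_benefits}, namely $\alpha_0 = 0.1$, $c_\omega = 0.01$, $\epsilon = 0.1$ with Nash bargaining over $\delta$. First I would pin down the no-regulation benchmark: setting $p=0$ so that $\theta$ is inert, I substitute $D$'s best response from Proposition \ref{prop:D_best_response} into $U_G$, apply Proposition \ref{prop:G_best_response} to find $G$'s optimal openness, and take the Nash-bargaining $\delta^*$ maximizing $U_G \cdot U_D$. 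For these parameters this yields the fully-closed outcome $\omega \approx 0$, $\alpha_1 \approx 0.102$, $U_G \approx 0.047$, $U_D \approx 0.049$, which I fix as the reference vector $\mathbf{u}^0$.

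Next I would produce a compliant regulation. By Proposition \ref{prop:indif_curve}, any $(p,\theta)$ strictly below the indifference curve $p = \theta(\delta + \alpha_0/\alpha_1 - \epsilon - c_\omega)\alpha_1$ forces $G$ to pick $\omega^* = \theta > 0$ rather than to close. Fixing such a $(p',\theta')$ inside the dotted region of Figure \ref{fig:regulation_benefits}, I recompute the full equilibrium, including the bargained $\delta^*$ (which must be re-solved because the bargaining stage precedes $G$'s openness choice), and verify $\omega' > \omega^0$, $\alpha_1' > \alpha_1^0$, $U_G' > U_G^0$, and $U_D' > U_D^0$. The first two are essentially immediate: $\omega$ rises from $\approx 0$ to $\theta$, and $D$'s best response is increasing in $\omega$ on the feasible range, so $\alpha_1$ rises. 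The improvement in $U_D$ follows because a larger $\omega$ slashes the production cost $(\alpha_1-\alpha_0)^2\omega^{-1}$ and enlarges the fine-tuned surplus, with the small operation term $c_\omega \alpha_1 \omega$ unable to offset it.

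The main obstacle is showing that $U_G$ strictly increases, since naively a penalty can only add cost to $G$. The resolution lies in the sequencing: the Nash bargain over $\delta$ is struck in anticipation of the regulated subgame, and under a binding penalty the bargaining solution shifts $\delta^* \to 1$, handing $G$ a larger share of a now-larger total revenue $r(\alpha_1')$, while $G$'s forgone closed revenue stays small because $\alpha_0$ is low. I would make this precise by evaluating $U_G$ at the bargained $\delta^*$ under both regimes and checking that the net effect is positive for the chosen parameters; this is where the low-$\alpha_0$ hypothesis does the work and the step that genuinely needs verification rather than being automatic.

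Finally, to meet Definition \ref{def:pareto_optimal}, I would upgrade the improving policy $(p',\theta')$ to a Pareto-optimal one. Since the attainable utility set $\mathcal{U} = \{\mathbf{u}(p,\theta)\}$ is compact and $\mathbf{u}(p',\theta')$ strictly dominates $\mathbf{u}^0$, maximizing $\sum_i u_i$ over the compact nonempty set $\{\mathbf{u}\in\mathcal{U} : \mathbf{u} \geq \mathbf{u}(p',\theta')\}$ yields a Pareto-efficient point that weakly dominates $\mathbf{u}(p',\theta')$ and hence still strictly dominates $\mathbf{u}^0$ in all four coordinates. I would then exhibit a strictly positive weight vector $w$ on the simplex for which this point is the maximizer of $w^\top \mathbf{u}$, so that it qualifies as a Pareto-optimal regulation under the definition. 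The care needed here is ensuring the dominating efficient point is supported by a positive-weight linear functional, which I would read off from the numerics for the exhibited game (or obtain via a supporting-hyperplane argument on the upper boundary of $\mathcal{U}$).
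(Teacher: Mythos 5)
Your proposal is correct in substance and follows essentially the same strategy as the paper: the paper's proof is exactly a numerical witness (with $\alpha_0 = 0.1$, $c_\omega = 0.05$, $\epsilon = 0.1$, the parameters of Figure \ref{fig:utilities_scatterplot} rather than Figure \ref{fig:regulation_benefits}), computing the no-regulation Nash-bargaining equilibrium $(\omega^* \approx 0,\ \delta^* = 0.53,\ \alpha_1^* = 0.1024,\ U_G = 0.0478,\ U_D = 0.0480)$ and then the regulated equilibrium at $(\theta, p) = (0.6, 0.05)$, where $\delta^* = 0.97$ and all four coordinates strictly improve. You correctly identify the one step that is not automatic --- that $U_G$ rises despite the penalty --- and attribute it to the re-bargained $\delta^* \to 1$ combined with low $\alpha_0$, which is precisely what the paper's numbers show; like the paper, you ultimately lean on numerics to verify it. Where you genuinely go beyond the paper is the final step: the paper stops at exhibiting a regulation that Pareto-dominates no regulation and never connects this to Definition \ref{def:pareto_optimal}, whereas you attempt to upgrade the dominating point to one that is Pareto-optimal in the definition's sense. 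That is a more faithful reading of the statement, but note your supporting-hyperplane step requires care beyond what you acknowledge: the attainable set $\{\mathbf{u}(p,\theta)\}$ has no reason to be convex (the equilibrium map is discontinuous across the indifference curve), so an efficient point on its upper boundary need not be the argmax of any strictly positive linear functional; you would either need to verify this numerically for the witness or weaken the claim to Pareto-efficiency. One small correction: you state that points strictly \emph{below} the curve $p = \theta(\delta + \alpha_0/\alpha_1 - \epsilon - c_\omega)\alpha_1$ force compliance, but Proposition \ref{prop:indif_curve} says $G$ fully closes when $p$ is \emph{less} than that expression --- compliance requires the penalty to exceed it (the paper's ``above/below'' language refers to its plot axes, not the inequality), so make sure your chosen $(p', \theta')$ satisfies $p' > \theta'(\delta + \alpha_0/\alpha_1 - \epsilon - c_\omega)\alpha_1$.
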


\begin{proof}
Given a game with Nash bargaining and the parameters $\alpha_0 = 0.1, c_\omega = 0.05, \epsilon = 0.1$, the no-regulation equilibrium is $(\omega^* \approx 0, \delta^* = 0.53, \alpha_1^* = 0.1024)$ with $U_D = 0.0480$  and $U_G = 0.0478$. Introducing a regulation of $\theta = 0.6$ and $p = 0.05$ leads to the equilibrium outcome $(\omega^* = 0.6, \delta^* = 0.97, \alpha_1 = 0.2746, U_G = 0.0575, U_D = 0.1090)$ because $U_G(\omega^* = 0.6) > U_G(\omega^* \approx 0)$. This Pareto-dominates no regulation, as $(\omega, \alpha_1, U_G, U_D)$ are all strictly better. Figure \ref{fig:regulation_benefits} visualizes the utility implications over the space of possible regulation tuples $(p, \theta)$. 
\end{proof}

\begin{proposition}[Bounds for Pareto-Optimal Regulation]
Let $\mu > 0$ be an arbitrarily small constant. Given that there exist Pareto-improving regulations for utilities ($\omega, \alpha_1, U_G, U_D$), the Pareto-optimal region is characterized by upper bounds of $\theta = \frac{1-\delta}{c_\omega - \delta}$, $p_\text{max}$ and lower bounds of $\theta = \mu$, $p = \theta (\delta + \frac{\alpha_0}{\alpha_1} - \epsilon - c_\omega) \alpha_1 + \mu$. 
\end{proposition}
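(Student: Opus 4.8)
The plan is to realize the Pareto-improving region as the intersection of three elementary regions in $(p,\theta)$-space—the compliance region in which $G$ sets $\omega^*=\theta$, the region in which $D$ participates with $\alpha_1^*>\alpha_0$, and the region in which all four objectives $(\omega,\alpha_1,U_G,U_D)$ weakly exceed their no-regulation values—and to show that each of the four stated bounds is the boundary of one of these constraints. Throughout I treat $\delta$ as the equilibrium bargained share $\delta^*(p,\theta)$ and $\alpha_1$ as $D$'s best response from Proposition~\ref{prop:D_best_response}, so the bounds are read off parametrically in $\delta$. The hypothesis that a Pareto-improving regulation exists guarantees the region is non-empty; the task is only to delimit its extent.

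First I would pin down the lower bounds. Since the no-regulation equilibrium has $\omega^*\approx 0$, any improvement in $\omega$ forces $G$ to comply with a strictly positive threshold, so $\theta>0$, i.e.\ $\theta\ge\mu$. Compliance itself, by Proposition~\ref{prop:indif_curve}, holds exactly when $p$ lies strictly above the indifference curve, $p>\theta(\delta+\frac{\alpha_0}{\alpha_1}-\epsilon-c_\omega)\alpha_1$, which gives the lower penalty bound $p=\theta(\delta+\frac{\alpha_0}{\alpha_1}-\epsilon-c_\omega)\alpha_1+\mu$. Next I would establish the upper threshold bound: by Proposition~\ref{prop:D_best_response}, $G$'s compliant choice $\omega^*=\theta$ produces a strict performance gain only while $\theta\le\frac{1-\delta}{c_\omega-\delta}$, since at equality the bracketed factor in $\alpha_1^*$ vanishes and $\alpha_1^*\to\alpha_0$, so $D$ abstains (equivalently $G$ withdraws) and no Pareto improvement survives. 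This identifies $\theta=\frac{1-\delta}{c_\omega-\delta}$ as the upper threshold bound.

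The remaining and hardest bound is the upper penalty $p_\text{max}$. Because $G$ does not pay the penalty once compliant, $p$ affects the compliant outcome only indirectly, through the bargaining stage: raising $p$ enlarges the set of $\delta$ that still induce compliance, and the joint-surplus-maximizing $\delta^*$ is pushed toward $1$. Using $U_D=(1-\delta(1-\theta))\alpha_1-(\alpha_1-\alpha_0)^2\theta^{-1}-c_\omega\alpha_1\theta$, which by the envelope theorem is decreasing in $\delta$ along $D$'s best response, and noting that the participation threshold $\frac{1-\delta}{c_\omega-\delta}$ also shrinks as $\delta\to1$, I would define $p_\text{max}$ as the largest $p$ at which the induced $\delta^*$ still keeps $D$ both participating ($\theta\le\frac{1-\delta^*}{c_\omega-\delta^*}$) and weakly above its no-regulation utility. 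Beyond $p_\text{max}$, $D$ is either driven to abstain or made worse off, so the outcome leaves the Pareto-improving region.

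I expect the main obstacle to be controlling the endogenous bargaining solution $\delta^*(p,\theta)$, which has no closed form (it is obtained by grid search in the paper) and renders the joint objective non-smooth at the compliance boundary. My intended handle is to argue that, restricted to the compliance branch, the joint objective is maximized at the largest compliance-feasible share $\delta_c(p,\theta)$ solving $p=\theta(\delta+\frac{\alpha_0}{\alpha_1}-\epsilon-c_\omega)\alpha_1$, and that $\delta_c$ is increasing in $p$ with $\delta_c\to1$. Establishing this monotonicity of $\delta^*=\delta_c$ through the comply/close kink is what makes the otherwise implicit threshold $p_\text{max}$ a well-defined boundary and completes the characterization of the region as $\theta\in[\mu,\frac{1-\delta}{c_\omega-\delta}]$ and $p\in[\theta(\delta+\frac{\alpha_0}{\alpha_1}-\epsilon-c_\omega)\alpha_1+\mu,\ p_\text{max}]$.
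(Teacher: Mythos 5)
Your proposal reaches the same four bounds by essentially the same decomposition as the paper: the lower penalty bound comes from requiring $(p,\theta)$ to clear the indifference curve of Proposition~\ref{prop:indif_curve} so that $G$ complies, the upper threshold bound comes from $D$'s participation condition $\omega \leq \frac{1-\delta}{c_\omega-\delta}$ of Proposition~\ref{prop:D_best_response} applied at $\omega^*=\theta$, and the lower bound $\theta=\mu$ comes from the observation that $\omega$ can only improve through the compliance case. The one place you genuinely diverge is the upper penalty bound. The paper disposes of it in one line: since the penalty is never actually paid once $G$ complies, raising $p$ below the indifference curve does not change $G$'s strategy, so the region simply extends to the exogenous maximum penalty $p_\text{max}$. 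You instead treat $p_\text{max}$ as an endogenous boundary, arguing that $p$ feeds back into the bargained share $\delta^*(p,\theta)$, pushes $\delta^*\to 1$, and eventually either drives $D$ to abstain or below its no-regulation utility. This is a more honest accounting of the endogeneity the paper itself acknowledges elsewhere (it notes that $\delta^*\to 1$ as regulation tightens, and that $\delta^*$ is only found by grid search), and it buys a sharper, self-contained definition of $p_\text{max}$; the cost is that your argument rests on an unproven monotonicity claim --- that the compliance-feasible share $\delta_c(p,\theta)$ is the bargaining optimum on the compliant branch and is increasing in $p$ --- which you correctly flag as the hard step but do not establish. Since the proposition as stated leaves $p_\text{max}$ unspecified, the paper's reading (any $p$ up to the maximum considered penalty) is the intended one, and your extra machinery, while not wrong in spirit, is not needed to prove the statement; if you keep it, you should note that it strengthens the result by characterizing when the upper penalty bound actually binds.
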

\begin{proof}
To improve $\omega$ beyond the no-regulation equilibrium, any Pareto-optimal regulation must create a binding constraint where $G$ complies with the threshold requirement. There are three cases of the game: (1) $G$ already clears the required openness threshold without regulation, (2) $G$ violates the threshold requirement and does not change strategy, or (3) $G$ adjusts the openness level to the threshold to comply with the regulation. Case (2) is dominated by (1), so the game must fall in (1) or (3). The regulation only has an effect in case (3), implying that the Pareto improvement region must lie under the indifference curve.

By Proposition $\ref{prop:indif_curve}$, the lower bound on penalty for the region is given by $p = \theta (\delta + \frac{\alpha_0}{\alpha_1} - \epsilon - c_\omega) \alpha_1 + \mu$, where $\mu > 0$ is a small constant. The upper bound on penalty is  $p_\text{max}$ because raising the penalty at a given threshold under the indifference curve does not change $G$'s strategy. By Proposition \ref{prop:D_best_response},  the constraint $\omega \leq \frac{1-\delta}{c_\omega - \delta}$ must be satisfied if $\alpha_1$ is improved. These two constraints imply $\theta \leq \frac{1-\delta}{c_\omega - \delta}$, giving an upper bound on $\theta$. Further, the region of Pareto improvement is lower-bounded by $\theta = \mu$ since any effective regulation must improve $\omega$ through case (3).
\end{proof}

\section{Data for Figure \ref{fig:openness_vs_perf}}\label{sec:data_for_figures}
Although ELO rankings and MMLU scores are used as heuristics to compare model performance, they are not meant to be a comprehensive measure of overall model capability.

\textbf{ELO Score} is the model's arena score from the \href{https://lmarena.ai/}{Chatbot Arena LLM Leaderboard}. \textbf{\% Closed} is calculated the same as Figure 4b from Eiras et al. \cite{eiras2024}, using Table 3's categorization of model components. The percentage is the the portion of components that received a fully closed  classification (C1/D1) out of the number of components where such a score is available (excluding ? and N/A labels).

\begin{table}[!ht]
\centering
\begin{tabular}{|l|l|l|}
\hline
\textbf{Model} & \textbf{ELO Score} & \textbf{\% Closed} \\
\hline 
Pythia  & 893 & 0 \\
T5 & 868 & 0 \\
Stable LM & 840 & 0 \\
Llama & 799 & 0.43  \\
DBRX  & 1103& 0.5 \\
Mistral-7B  & 1075& 0.71 \\
Falcon  & 1034 & 0.67 \\
PaLM  & 1004 & 1.0 \\
Llama 2  & 1093 & 0.8 \\
Command R  & 1180 & 0.8 \\
Command R+  & 1215 & 0.8 \\
Llama 3  & 1269 & 0.8 \\
GPT-4  & 1256 & 0.82 \\
Gemini  & 1380 & 0.92 \\
Claude 2 & 1132  & 0.92\\
\hline
\end{tabular}
\end{table}

\newpage

\textbf{MMLU Score} uses exact match accuracy for MMLU All Subjects from the \href{https://crfm.stanford.edu/helm/mmlu/latest/#/leaderboard/mmlu}{HELM leaderboard}.  MMLU scores marked with (*) are self-reported in the model documentation, rather than collected from HELM evaluations, so the comparisons may be inconsistent.

\begin{table}[!ht]
\centering
\begin{tabular}{|l|l|l|l|}
\hline
\textbf{Model} & \textbf{MMLU Score} & \textbf{Weights} & \textbf{Release Date} \\
\hline
Claude 3.5 Sonnet &	0.873	& Closed	& 06-2024 \\
DeepSeek v3 & 	0.872&	Open &	12-2024 \\
Gemini 1.5 Pro	& 0.869 & 	Closed & 02-2024 \\
Claude 3 Opus &	0.846	& Closed	& 03-2024 \\
Llama 3.1  Instruct Turbo (405 B)& 	0.845 &	Open & 07-2024 \\
GPT-4o & 	0.843 &	Closed &	05-2024 \\
Qwen2.5 Instruct Turbo (72B) &	0.834 & Closed	& 04-2024 \\
GPT-4 &	0.824& Closed &	03-2023 \\
Amazon Nova Pro	 & 0.82		 & Closed		 & 12-2024 \\
GPT-4 Turbo & 	0.813	 & 	Closed	 & 11-2023 \\
Llama 3.2 Vision Instruct Turbo (90B)	 & 	0.803 & Open  & 	09-2024 \\
Llama 3.1 Instruct Turbo (70B) & 	0.801 & Open & 	07-2024 \\
Mistral Large 2	 & 0.8 & 	Open & 	07-2024 \\
Gemini 2.0 Flash & 	0.797	 & Closed & 	12-2024 \\
Llama 3 (70B)	 & 0.793	 & Open & 	04-2024 \\
Llama 3.3 Instruct Turbo (70B) & 	0.791 & 	Open	 & 12-2024 \\
PaLM-2 (Unicorn)	 & 0.786 & 	Closed & 05-2023 \\
Jamba 1.5 Large	  & 0.782	 & Open  & 	08-2024 \\
Mixtral (8x22B) & 	0.778	 & Open  & 	04-2024 \\
Phi-3 (14B)	 & 0.775 & 	Open  & 	05-2024 \\
Qwen1.5 (72B) & 	0.774 & Closed & 	02-2024 \\
Yi (34B) & 	0.762	 & Open & 	11-2023 \\
Gemma 2 (27B)	 & 0.757 & 	Closed & 	06-2024 \\
Claude 3.5 Haiku	 & 0.743 & 	Closed & 11-2024 \\
DBRX Instruct	 & 0.741	 & Open  & 	03-2024 \\
Gemini 1.5 Flash & 	0.739	 & Closed& 	05-2024 \\
DeepSeek LLM Chat (67B)	 & 0.725 & 	Open  & 	11-2023 \\
Command R Plus	 & 0.694 & 	Open  & 	08-2024 \\
PaLM-2 (Bison)	 & 0.692 & 	Closed	 & 11-2023 \\
GPT-3.5 Turbo & 	0.689 & Closed& 	03-2023 \\
Llama 3 (8B) & 	0.668	 & Open  & 	04-2023 \\
OLMo 1.7 (7B)	 & 0.538 & Open  & 07-2024 \\
Chinchilla (70B) & *0.676 & Closed & 03-2022 \\
\href{https://arxiv.org/pdf/2204.02311}{PaLM (540B)} & *0.693 & Closed & 04-2022 \\
U-PaLM & *0.707 & Closed & 10-2022 \\
Flan-PaLM  & *0.752 & Closed & 10-2022 \\
Llama (65B) & *0.634 & Open & 02-2023 \\
\href{https://www.llama.com/llama2/}{Llama 2 (70B)} & *0.689 & Open & 07-2023 \\
Falcon (180B) & *0.706 & Open & 09-2023 \\
\hline
\end{tabular}
\end{table}

\newpage
\section{Additional Figures} \label{sec:additional_figures}

The equilibrium of the game is characterized by a triple $(\omega,\delta, \alpha_1)$. Here, we check the effects of varying $\alpha_0, \epsilon$, and $c_\omega \in \{0, 0.01, 0.1, 0.5\}$  on the no-regulation equilibrium, $(\omega, \delta, \alpha_1)$ for $p=0$, under different bargaining solutions. Figure \ref{fig:no_penalty_equilibrium_strategies} uses $c_\omega = 0.1$ over different $\alpha_0$ and $\epsilon$ values.

\subsection{Openness Levels}
\begin{figure}[!ht]
\includegraphics[width=\textwidth]{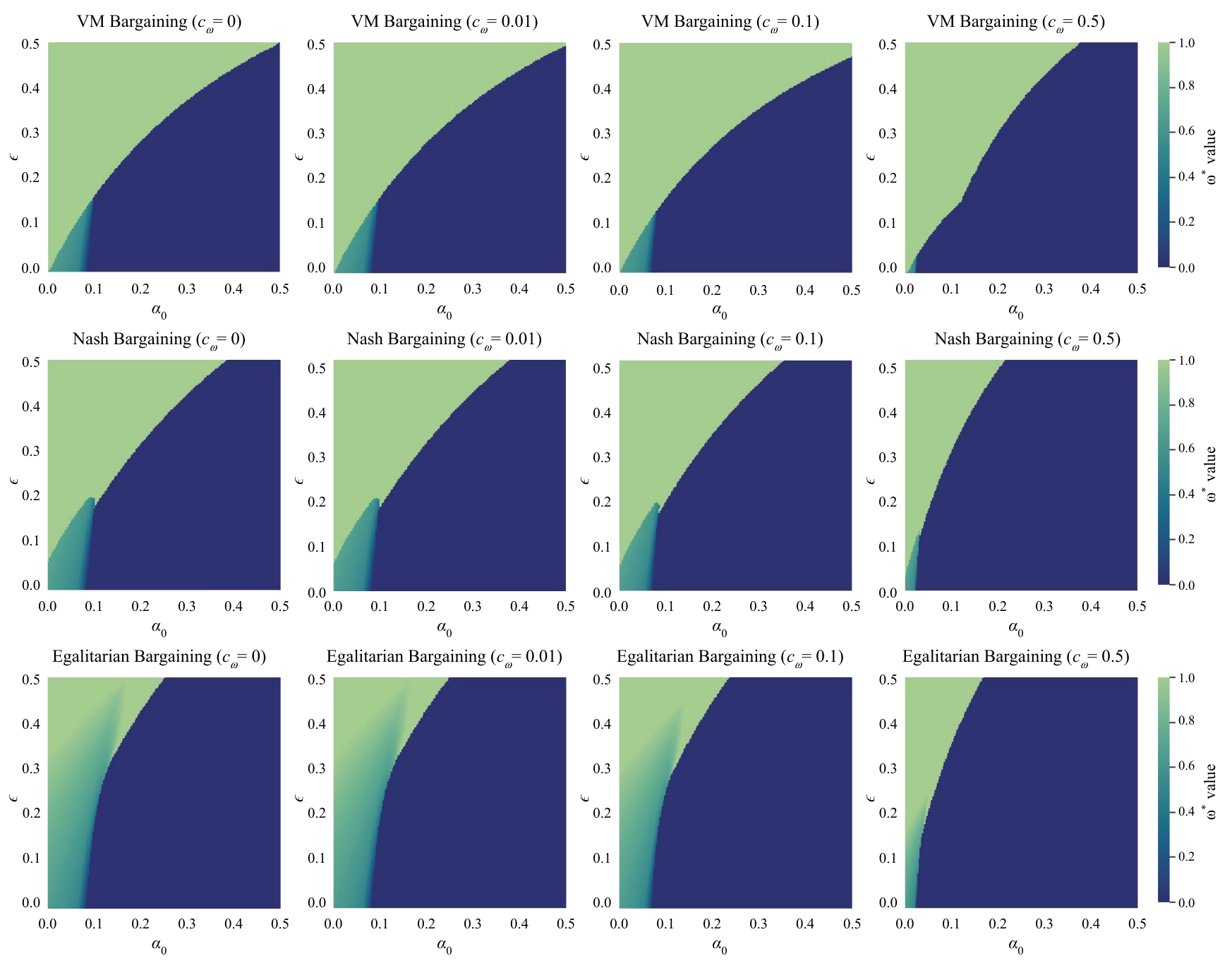}
\caption{Openness levels ($\omega^*$) chosen by $G$ without regulation ($p = 0$) under VM, Nash, and egalitarian bargaining solutions. Rows correspond to bargaining solutions, and operation costs increase from left to right: $c_\omega = 0, 0.01, 0.1, 0.5$. Increasing $c_\omega$ causes less intermediate values of openness in the region of low $\alpha_0, \epsilon$ because $G$ avoids operation costs by fully opening. The fully open area at high $\alpha_0, \epsilon$ recedes as $c_\omega$ increases for Nash and egalitarian bargaining solutions because $D$'s utility is harmed by high operation costs for high performance when $G$ receives the majority of revenue through $\delta^*$.}
\label{fig:no_regulation_omega}
\end{figure}

\newpage
\subsection{Bargaining Coefficients}
\begin{figure}[!ht]
\includegraphics[width=1\textwidth]{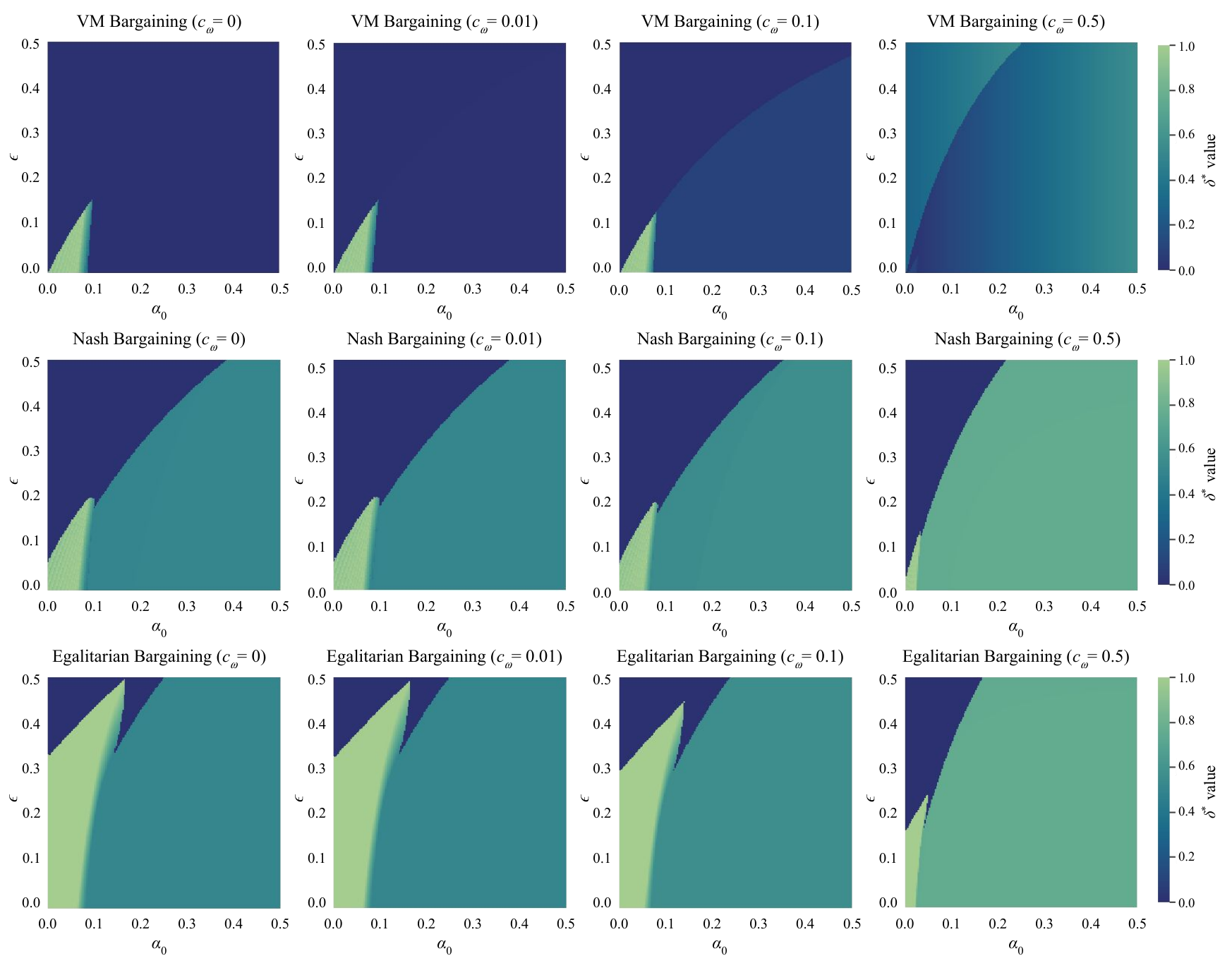}
\caption{Bargaining coefficients ($\delta^*$) jointly chosen by $G$ and $D$ without regulation ($p = 0$) under VM, Nash, and egalitarian bargaining solutions. Rows correspond to bargaining solutions, and operation costs increase from left to right: $c_\omega = 0, 0.01, 0.1, 0.5$. For all three bargaining solutions, higher operation costs push $\delta$ to favor $G$ more in regions where $G$ fully closes the model. For low $\alpha_0$, where $G$ fully opens the model and there is less revenue, the bargain favors $D$ more.}
\label{fig:no_regulation_delta}
\end{figure}

\newpage
\subsection{Final Performance}
\begin{figure}[!ht]
\includegraphics[width=\textwidth]{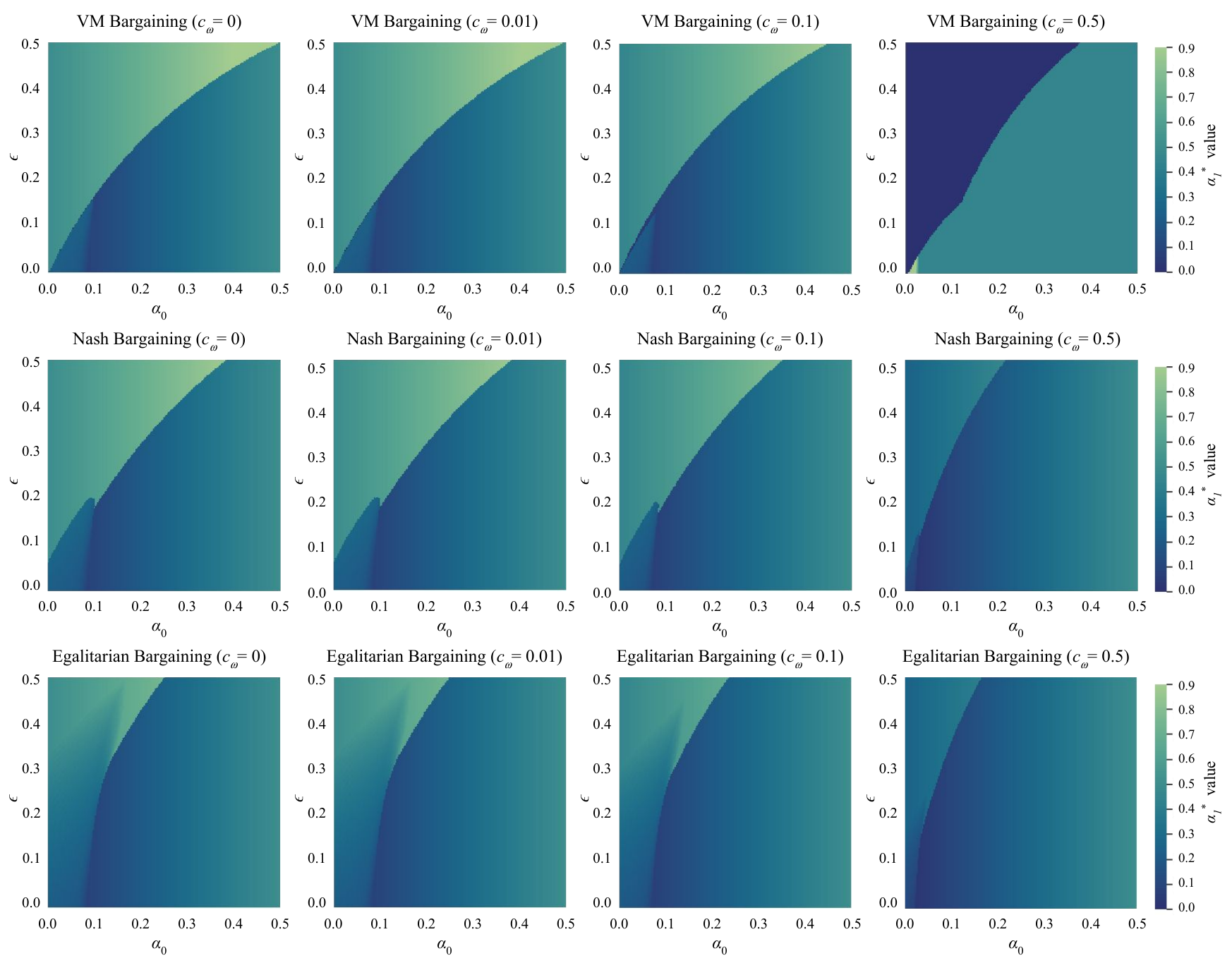}
\caption{Improvement level ($\alpha_1^*$) chosen by $D$ without regulation ($p = 0$) under VM, Nash, and egalitarian bargaining solutions. Rows correspond to bargaining solutions, and operation costs increase from left to right: $c_\omega = 0, 0.01, 0.1, 0.5$. $D$'s level of improvement decreases as operation costs increase across all bargaining solutions, and regardless of whether $G$ fully opens or closes the model.}
\label{fig:no_regulation_alpha_1}
\end{figure}

\end{document}